\begin{document}
\title{Convex-area-wise Linear Regression and Algorithms for Data Analysis}
%
%

\author{Bohan Lyu\inst{1,2}\orcidID{0009-0005-6462-8942} \and Jianzhong Li\inst{1,2}\orcidID{0000-0002-4119-0571}}
%
\institute{
	\email{18b903024@stu.hit.edu.cn}\\
	\email{lijzh@hit.edu.cn}
	\\
	\inst{1}Harbin Institute of Technology, Harbin, Heilongjiang, China
	\\
	\inst{2}Shenzhen University of Advanced Technology, Shenzhen, Guangdong, China
}

\maketitle              
\begin{abstract}
	This paper introduces a new type of regression methodology named as Convex-Area-Wise Linear Regression(CALR), which separates given datasets by disjoint convex areas and fits different linear regression models for different areas. This regression model is highly interpretable, and it is able to interpolate any given datasets, even when the underlying relationship between explanatory and response variables are non-linear and discontinuous. In order to solve CALR problem, 3 accurate algorithms are proposed under different assumptions. The analysis of correctness and time complexity of the algorithms are given, indicating that the problem can be solved in $o(n^2)$ time accurately when the input datasets have some special features. Besides, this paper introduces an equivalent mixed integer programming problem of CALR which can be approximately solved using existing optimization solvers.

\keywords{Data analysis \and Linear regression \and Segmented regression \and Piecewise linear regression \and Machine learning \and Optimization.}
\end{abstract}
\section{Introduction}
Multiple-model linear regression(MLR) is a special and important data analysis method. 
Let $\bm{DS}=\{(\bm{x}_i,y_i)|x_i\in\mathbb{R}^d,y_i \in \mathbb{R},i=1,\cdots,n\}$ be a given dataset where $\bm{x}_i$ is the explanantory variable and $y_i$ is the response variable.
Multiple-model linear regression divides the input datasets into several subsets, and then construct local linear regression model for each subset, such as the example shown in Figure 1 of Appendix A. Such model can approximately represent the non-linear underlying relationship between $y$ and $\bm{x}$. The piecewise linear regression\cite{toriello2012plcf,bemporad2022piecewise}, max-affine regression\cite{ghosh2019max}, PLDC regression\cite{siahkamari2020piecewise}, MMLR\cite{lyu2023efficient} are all typical multiple-model linear regression methods.

Nowadays, real-world datasets, especially big data have the feature that different subsets of a dataset fitting highly different regression models, which is described as \emph{diverse predictor-response variable relationships}(DPRVR) in \cite{dong2015pattern}.
Taking TBI dataset in \cite{TBI2005} as a real-world example. The \emph{mean squared error}(MSE) is 109.2 when one linear regression model is used to model the whole TBI.
However, the $MSE$ is reduced to 12.3, when TBI is divided into 6 disjoint hypercubes and 6 different linear regression models are used individually, while one global $f_0$ is used for the data not belonging to the 6 subsets.
So fitting data using MLR is more suitable than using one single linear or polynomial function in regression tasks.

Besides, MLR model has statistical advantages and high interpretability.
The numeric value of parameters can show the importance of variables, and the belonging information about the confidence coefficients and intervals makes the model more credible in practice.
Therefore, MLR is widely used in research areas requiring high interpretability, such as financial prediction, investment forecasting, biological and medical modelling, etc\cite{book2021lra}.
Most machine learning and deep learning models might be more precise in predicting tasks, but the black-box feature limits their ranges of application\cite{guidotti2018survey}.
These two characteristics make MLR a necessary methodology for nowadays data analysis.

However, there are still shortcomings of the existing multiple-model linear regression methods. The most important ones are described as follows.

1. The dimension $d$ of $\bm{DS}$ cannot be relatively large, since the time complexity of most algorithms constructing MLR models grows exponentially with $d$\cite{toriello2012plcf,wahba1990spline};

2. The subsets being used to construct local models must be hyper cubes or single hyperplane\cite{diakonikolas2020sr,wang1996mtree}. Thus, the accuracy of them is lower when the underlying partition of a given dataset are not in such forms;

3. Some methods need apriori knowledge that is difficult to get, such as the positions of breakpoints\cite{arumugam2003emprr} or the exact number of pieces\cite{diakonikolas2020sr}.

4. The time complexity of the methods is high. Even the state-of-art approximate algorithm, PLDC regression, has the time complexity of $O(d^2 n ^5)$ \cite{siahkamari2020piecewise}.

To overcome the disadvantages and inspired by the TBI dataset, this paper proposes a new multi-model based linear regression method named as Convex-area-wise Linear Regression(CALR). 

Instead of using hypercubes or single hyperplane to separate datasets, CALR considers there is a default linear model $f_0$ lying on the datasets, and there are several disjoint convex areas $C_1,C_2,\cdots,C_M$, which contains subsets of datasets that fitting different local linear models $f_1,\cdots,f_M$, and each convex area $C_i$ can be defined by multiple hyperplanes. An illustration of CALR is shown in Figure 2 of Appendix A.


The major contributions of this paper are concluded as follows.  
\begin{itemize}
	\item A new function class named convex-area-wise linear function(CALF) is formally defined. The important properties and expressive ability of CALF is proved.
	\item The regression problem corresponded with convex-area-wise linear function is formally proposed, named as CALR. The CALR problem has been transformed into a mix-integer programming problem, so that it can be approximately solved by any optimization solvers. 
	\item A exponential time naive algorithm naiveCALR is designed for getting optimal model of CALR problem.
	\item Three algorithms are designed to accurately solve the CALR problem under special assumptions. The correctness and expected time complexity of them are also proved. When $\bm{DS}$ is convex-area separable, the expected time complexity can reach $o(n^2)$, which is irrelevant to the dimension $d$ and it's lower than PLDC regression. 
\end{itemize}

The rest of this paper is organized as follows. Section 2 gives the formal definition and important properties of the new proposed convex-area-wise linear function and the corresponded regression problem. Section 3 gives the design and analysis of three algorithms under different assumptions. Finally, Section 4 concludes the paper.

\section{Preliminaries and Problem Definition}
This section introduces the definition and specialties of the new proposed function class CALF and the formal corresponded regression problem CALR. An optimizing version of CALR is also proposed for users to approximately solve it by existing optimization solvers.

\subsection{Convex-area-wise Linear Function}
Definition of convex-area-wise linear function is given as follows. 

\begin{definition}[Convex-Area-Wise Linear Function]
	$ $
	Suppose that $x\in \Omega \subset \mathbb{R}^d$. Let $H=\{(f_i, C_i)\ | 0 \leq i \leq M\}$ be a set of function-convex area pairs, such that for $i=0,\cdots,M$, $f_i(x)=\beta_i\cdot(1,x)=\beta_{0i} + \beta_{1i} x_1 + \cdots + \beta_{di} x_d$ is a linear function, for $i=1,\cdots,M$, $C_i=\cap_{k=1}^{m_i}\{x|\alpha_{ik}\cdot x + \gamma_{ik} \leq 0, \alpha_{jk} \in \mathbb{R}^d\}$ is a convex area defined by $m_i$ semi-spaces, $C_1,\cdots,C_M$ are disjoint, and $C_0=\Omega-\cup_{i=1}^M C_i$. A function $f_H:\Omega\to\mathbb{R}$ is called a convex-area-wise linear function, CALF for short, if $f_H(x)=\sum_{i=0}^M I_i(x)f_i(x)$, where $I_i(x)=\begin{cases}
		1, x\in C_i \\
		0,x\notin C_i
	\end{cases}$ is the indicative function of $C_i,i=0,\cdots,M$.

	Let $CALF(\Omega)$ be the set of all convex-area-wise linear functions defined on $\Omega\subset \mathbb{R}^d$.
\end{definition}

According to the \textit{Definition 1}, every $f_H\in CALF(\Omega)$ is piece-wise linear on $\Omega$. 
When $x\in C_i,i=1,\cdots,M$, $f_H(x)=f_i(x)$, and $f_H(x)=f_0(x)$ otherwise. 
We say the action scope of $f_i$ is $C_i$. An example of CALF is shown in Figure 2 of Appendix A.

The most important property of CALF is that any finite datasets can be fitted by a CALF function. 
In order to prove the property, the definition of PLDC functions and a corresponded proposition from \cite{siahkamari2020piecewise} are shown without proof. 
Then the characteristics is formally given as the following \emph{Theorem 1}.

\begin{definition}[PLDC function]
	If a function $f$ can be represented as $f(x)=max_{1\leq k\leq K}\{\alpha\cdot x + c_k \} - max_{1\leq k\leq K}\{\beta\cdot x + c'_k \}$, where $K\in \mathbb{Z}^+, x,\alpha_k,\beta_k\in\mathbb{R}^d,c_k,c'_k\in\mathbb{R}$, the $f$ is called as a PLDC function.
\end{definition}
\begin{lemma}\label{lemma1}
	Given any finite data $D=\{(x_i,y_i)|i=1,2,\cdots,n\}$, and $y_i\neq y_j \Rightarrow x_i \neq x_j$, there exists a PLDC function interpolating $D$.
\end{lemma}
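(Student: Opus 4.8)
The plan is to prove the statement constructively, exploiting the fact that a PLDC function is precisely the difference of two \emph{max-affine} (convex piecewise-linear) functions $g(x)=\max_k\{\alpha_k\cdot x + c_k\}$ and $h(x)=\max_k\{\beta_k\cdot x + c'_k\}$. First I would use the hypothesis $y_i\neq y_j\Rightarrow x_i\neq x_j$ to reduce to the case of pairwise distinct explanatory variables: its contrapositive says $x_i=x_j\Rightarrow y_i=y_j$, so after discarding duplicate points the data defines a genuine partial function on the distinct $x_i$, and any interpolant of the reduced set automatically interpolates the original set.

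The core of the construction is a tangent-to-paraboloid trick that forces each affine piece to be responsible for exactly one data point. For a scale $\rho>0$ I would set $h_\rho(x)=\max_{1\le i\le n}\{\rho(2x_i\cdot x - |x_i|^2)\}$. Since $2x_i\cdot x_j - |x_i|^2 = |x_j|^2 - |x_j - x_i|^2$, at $x=x_j$ the $i$-th affine term equals $\rho(|x_j|^2-|x_j-x_i|^2)$, so the $j$-th term strictly dominates all others (the gap $\rho|x_j-x_i|^2$ is positive because the $x_i$ are distinct), giving $h_\rho(x_j)=\rho|x_j|^2$ with a unique maximizer. Next I would encode the target values as offsets, defining $g_\rho(x)=\max_i\{\rho(2x_i\cdot x - |x_i|^2) + y_i\}$, which is again max-affine and shares the \emph{same} slopes as $h_\rho$.

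Then I would set $f = g_\rho - h_\rho$, which is manifestly a PLDC function with $\alpha_i=\beta_i=2\rho x_i$, $c_i=-\rho|x_i|^2+y_i$ and $c'_i=-\rho|x_i|^2$. Provided the $j$-th piece still uniquely attains the maximum in $g_\rho$ at $x_j$, we get $g_\rho(x_j)=\rho|x_j|^2+y_j$ and hence $f(x_j)=y_j$, as required. The main obstacle — and the only place where a genuine choice is made — is guaranteeing this unique-maximizer property in $g_\rho$: at $x_j$ the $j$-th term beats the $i$-th one iff $\rho|x_j-x_i|^2 > y_i-y_j$, so I would finish by choosing $\rho$ strictly larger than the finite quantity $\max_{i\neq j}\frac{y_i-y_j}{|x_j-x_i|^2}$, which is well-defined precisely because the $x_i$ are distinct. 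Everything else is routine verification; the quantitative selection of the scale $\rho$ against the squared inter-point distances is the crux of the argument.
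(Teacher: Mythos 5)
Your proof is correct, and it is worth noting that the paper itself does not prove this lemma at all: it is imported verbatim from the PLDC reference (\cite{siahkamari2020piecewise}) and stated ``without proof.'' What you supply is a self-contained constructive argument, and it is essentially the standard paraboloid-tangent construction that underlies the cited result: both maxima share the slopes $2\rho x_i$, the identity $2x_i\cdot x_j-|x_i|^2=|x_j|^2-|x_j-x_i|^2$ makes the $j$-th affine piece the strict unique maximizer of $h_\rho$ at $x_j$ with gap $\rho|x_j-x_i|^2$, and the offsets $y_i$ in $g_\rho$ are absorbed once $\rho$ exceeds $\max_{i\neq j}(y_i-y_j)/|x_j-x_i|^2$, yielding $f(x_j)=g_\rho(x_j)-h_\rho(x_j)=y_j$. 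Your preliminary reduction via the contrapositive $x_i=x_j\Rightarrow y_i=y_j$ is exactly what is needed to make the denominators $|x_j-x_i|^2$ nonzero. Two small points to make explicit in a final write-up: (i) the threshold $\max_{i\neq j}(y_i-y_j)/|x_j-x_i|^2$ can be negative, so you should take $\rho$ larger than the maximum of that quantity and $0$ (you do posit $\rho>0$ at the outset, so this is only a matter of stating the choice in one place); (ii) the paper's Definition~2 literally writes $\alpha\cdot x+c_k$ with a $k$-independent slope, which is evidently a typo for $\alpha_k\cdot x+c_k$ (the definition quantifies $\alpha_k,\beta_k\in\mathbb{R}^d$), and your construction matches the intended definition with $K=n$ pieces in each maximum. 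What your approach buys over the paper's is a concrete, quantitative interpolant with an explicit number of pieces and an explicit scale parameter, which makes the lemma usable without chasing the external reference.
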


\begin{theorem}\label{Theorem 1}
	Given any dataset $D=\{(x_i,y_i)|i=1,2,\cdots,n\}$, where $x_i \in \mathbb{R}^d,y_i\in\mathbb{R}$, $y_i\neq y_j \Rightarrow x_i \neq x_j$, there exists a function $f\in CALF$ fitting $D$.
	
	Furthermore, there exists functions $f$ such that $f\in CALF$ but $f\notin PLDC$. 
\end{theorem}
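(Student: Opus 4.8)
The plan is to prove the two claims of \emph{Theorem 1} separately, since they have quite different flavors. The first claim—that every finite dataset admits an interpolating CALF—can be reduced almost entirely to \emph{Lemma~\ref{lemma1}}, the interpolation result for PLDC functions. The second claim—that $CALF \not\subseteq PLDC$—is a structural separation that I expect to require an explicit witness function exploiting a property that PLDC functions cannot have.

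For the first claim, my plan is to show that $PLDC \subseteq CALF$ in a sense strong enough to transfer the interpolation property, or more directly, to build the interpolating CALF by hand. The cleanest route is to observe that any piecewise-linear function whose domain is partitioned into finitely many convex pieces with a linear function on each piece is by definition a CALF. A PLDC function $f(x)=\max_{1\le k\le K}\{\alpha_k\cdot x + c_k\} - \max_{1\le k\le K}\{\beta_k\cdot x + c'_k\}$ is piecewise linear, and the two $\max$ terms each induce a partition of $\mathbb{R}^d$ into finitely many convex polyhedral cells (the regions where a particular index $k$ achieves the maximum are intersections of half-spaces, hence convex). The common refinement of these two partitions is again a partition into finitely many convex polyhedral cells, and $f$ is linear (affine) on each cell of the refinement. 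Hence $f$ is representable in the form required by \emph{Definition~1}, so $f\in CALF$. Combining this with \emph{Lemma~\ref{lemma1}}, the PLDC function interpolating $D$ is itself a CALF interpolating $D$, which proves the first claim. The one technical point to check carefully is that the cells of the refinement are genuinely disjoint convex areas defined by finitely many half-spaces and that one of them can play the role of the default region $C_0=\Omega-\cup_{i=1}^M C_i$; this is routine but should be stated.

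For the second claim I would exhibit a concrete function in $CALF\setminus PLDC$. The natural obstruction is \textbf{continuity}: every PLDC function, being a difference of two maxima of affine functions, is continuous (indeed a difference of convex functions), whereas \emph{Definition~1} places no continuity requirement on a CALF across the boundaries of its convex areas. So my plan is to construct a CALF that is discontinuous, which immediately certifies it is not PLDC. Concretely, take $d=1$, $\Omega=\mathbb{R}$, a single convex area $C_1=\{x \mid x\le 0\}$ with $f_1(x)=1$, and default region $C_0=\{x\mid x>0\}$ with $f_0(x)=0$; then $f_H$ is a step function, plainly a valid CALF but discontinuous at $x=0$ and therefore not PLDC. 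The argument has two halves: first, verify the candidate is a legitimate CALF by checking it matches \emph{Definition~1} (a single half-space defines the convex area, the areas are trivially disjoint, and the indicator decomposition holds); second, invoke the continuity of PLDC functions to conclude no PLDC representation can equal a discontinuous function.

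The main obstacle, modest as it is, lies in the second half of the separation: I must justify cleanly that \emph{every} PLDC function is continuous, so that discontinuity is a legitimate certificate of non-membership. This follows because each $\max_{1\le k\le K}\{\alpha_k\cdot x + c_k\}$ is a pointwise maximum of finitely many affine (hence continuous) functions and is therefore continuous, and the difference of two continuous functions is continuous; I would state this as a short lemma or inline remark. Everything else—the convex-cell refinement in the first claim and the explicit step-function witness in the second—is direct verification against \emph{Definition~1} and requires no delicate estimates.
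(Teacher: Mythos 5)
Your proposal is correct and follows essentially the same route as the paper: the first claim is proved by showing $PLDC \subseteq CALF$ via the common refinement of the two convex-cell partitions induced by the max terms and then invoking \emph{Lemma~1}, and the second claim is certified by a discontinuous CALF witness together with the observation that every PLDC function, as a difference of maxima of finitely many affine functions, is continuous. Your witness (a one-dimensional step function) differs only cosmetically from the paper's two-dimensional discontinuous example; the obstruction exploited is identical.
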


\begin{proof}
	It's only needed to prove that $PLDC \subset CALF$. Suppose that $f(x)=\max_{1\leq k\leq K}\{\alpha\cdot x + c_k \} - \max_{1\leq k\leq K}\{\beta\cdot x + c'_k \}$ is a PLDC function, and let $f_1(x)=\max_{1\leq k\leq K}\{\alpha\cdot x + c_k \},f_2(x)=\max_{1\leq k\leq K}\{\beta\cdot x + c'_k \}$.
	
	From the definition of $f_1$, it is a continuous function with at most $K$ hyperplanes $\pi_k : \alpha_k \cdot x +c_k - y = 0, k=1,\cdots,K$ in $\mathbb{R}^{d+1}$. 
	Any $\pi_k$ has the boundary represented by $\alpha_k \cdot x+c_k \geq \alpha_j\cdot j+c_j$, which is a semi-space $(\alpha_k -\alpha_j)\cdot x + c_k -c_j \geq 0$, for $j\neq k$.
	
	Thus, the $f_1(x)=\alpha_k \cdot x+ c_k$ when $x\in C_k=\cap_{j\neq k}\{y\in\mathbb{R}^d|(\alpha_k -\alpha_j)\cdot y + c_k -c_j \geq 0\}$ for $k=1,\cdots,K$. 
	$C_k\subset \mathbb{R}^d$ is convex since $C_k$ is the intersection of $K-1$ semi-spaces.
	Similarly, $f_2(x)=\beta_k \cdot x +c'_k$ when $x\in C'_k$, where $C'_k\subset\mathbb{R}^d$ is convex, for $k=1,\cdots,K$.
	
	Then, $f(x)=f_1(x)-f_2(x)$ contains at most $K^2$ pieces of hyperplanes and $f(x)=(\alpha_i-\beta_j)\cdot x + c_k-c'_k$ when $x\in S_i\cap S'_j$, $i,j=1,\cdots,K$. 
	Besides, $C_i \cap C'_j = (\cap_{k\neq i}\{x\in\mathbb{R}^d|(\alpha_i-\alpha_k)\cdot x+c_i-c_k\geq 0\}) \cap (\cap_{k\neq j}\{x\in\mathbb{R}^d|(\beta_j-\beta_k)\cdot x+c'_j-c'_k\geq 0\})$ is a convex area bounded by hyperplanes.
	
	Therefore, $f(x)$ can be represented by a CALF function $f_H$, where $H=\{(g_{ij},C_{ij})|g_{ij}(x)=(\alpha_i-\beta_j)\cdot x +c_i-c'_j, C_{ij}=C_i\cap C'_j,i,j=1,\cdots,K\}$. So if a dataset $D$ can be interpolated by a PLDC function $f$, $D$ can be interpolated by a CALF $f_H$ constructed above.
	
	Combined with \emph{Lemma 1}, the first part of \emph{Theorem 1} is proved.
	
	Further, consider the function$f(x_1,x_2)=
	\begin{cases}
		\frac{1}{2}x_1-1, x_1\leq 0 \\
		-2x_1+1,x_1>0
	\end{cases}$. Obviously, $f(x_1,x_2)$ can be represented by $f_H\in CALF$ with $H={(g_1,C_1),(g_2,C_2)}$ where $g_1(x)=\frac{1}{2}x_1-1,g_2(x)=-2x_1+1$ and $C_1={(x_1,x_2)|x_1\leq 0},C_2={(x_1,x_2)|-x_1\leq \varepsilon}$, $\varepsilon$ is small enough.
	
	However, $f$ can not be represented by any PLDC function. Let $h(x)=h_1(x)-h_2(x)$ be a PLDC function, $h_1(x)$ and $h_2(x)$ are continuous because they are the maximum of finite linear functions \cite{boyd2004convex}. Then $h(x)=h_1(x)-h_2(x)$ is continuous on $\mathbb{R}^2$. Therefore, $PLDC \subsetneq CALF$.
	\qed\end{proof}

\textit{Theorem 1} shows that CALF has stronger expressive ability than PLDC. CALF functions can fit any finite given numerical datasets $D$, even the underline model of $D$ is non-linear and not continuous.

\subsection{Convex-area-wise Linear Regression}
Firstly the definition of traditional linear regression problem is given as follows.
\begin{definition}[Linear Regression Problem]
	$ $
	
	\textbf{Input:} A numerical dataset $\bm{DS}=\{(x_i, $ $y_i)\ | 1 \leq i \leq n \}$, where $x_i \in \mathbb{R}^d, y_i \in \mathbb{R}$, $y_i = f(x_i) + \varepsilon_i$ for some function $f$, $\varepsilon_i \sim N(0,\sigma^2)$, and all $\varepsilon_i$s are independent.
	
	\textbf{Output:} A function $\hat{f}(x)=\bm{\beta} \cdot (1,x)=\beta_0 + \beta_1 x_1 + \cdots + \beta_d x_d$ such that $MSE(\bm{DS},\hat{f})=\frac{1}{n}\sum_{i=1}^n[(y_i-\hat{f}(x_i))^2]$ is minimized. 
\end{definition}

One way to construct optimal linear regression model for $\bm{DS}$ is \emph{pseudo-inverse matrix method}. The time complexity of it is $O(d^2 n + d^3)$ \cite{book2021lra}. So this paper let lr($\bm{DS}$) be any $O(d^2 n + d^3)$ time algorithm to construct the optimal linear function $\hat{f}$ of given datasets $\bm{DS}$.

The p-value of F-test for $\hat{f}$ is a convincing criterion to judge the goodness of $\hat{f}$, which is denoted as $p_F(\hat{f})$ in this paper. The $p_F(f)$ can be calculated in $O(n)$ time\cite{book2021lra}. Generally, linear regression model $\hat{f}\cong f$ when $p_F(f)<0.05$, where $f$ is the underlying model of $\bm{DS}$.
Besides, it's always assumed that $n > d+1$, since a $d$-dimensional linear function can perfectly fit any $n$ data points when $n \leq d+1$.
The details of linear regression and \emph{pseudo-inverse matrix method} is shown in Appendix E.

Similarly, the regression problem corresponded with CALF is defined as follows.

\begin{definition}[Convex-area-wise Linear Regression Problem]
	$ $
	
	\textbf{Input:} A numerical dataset $\bm{DS}=\{(\bm{x}_i, $ $y_i)\ | 1 \leq i \leq n \}$, where $\bm{x}_i \in \mathbb{R}^d, y_i \in \mathbb{R}$, $y_i = f(\bm{x}_i) + \varepsilon_i$ for an $f$, $\varepsilon_i \sim N(0,\sigma^2)$, and all $\varepsilon_i$s are independent identically distributed, Integer $M\geq 0$.
	
	\textbf{Output:} An function $f_H \in CALF$, where $H=\{(\hat{f}_j, \bm{C}_j)\ | 0 \leq j \leq M\}$ such that $MSE(\bm{DS},f_H)=\frac{1}{n}\sum_{i=1}^n(y_i-f_H(\bm{x}_i))^2$ is minimized.

\end{definition}

The problem in \textit{Definition 4} is denoted as CALR problem for short. The decision problem of CALR is given as follows.

\begin{definition}[Decision problem of CALR]
	$ $
	
	\textbf{Input:} A numerical dataset $\bm{DS}=\{(\bm{x}_i, $ $y_i)\ | 1 \leq i \leq n \}$, integer $M\geq 0$, positive $B>0$.
	
	\textbf{Output:} If there exists a function $f_H \in CALF$, where $H=\{(\hat{f}_j, \bm{C}_j)\ | 0 \leq j \leq M\}$ such that $MSE(\bm{DS},f_W)=\frac{1}{n}\sum_{i=1}^n(y_i-f_W(\bm{x}_i))^2 < B$.
\end{definition}

Given $\bm{DS}$ and $f_H\in CALF$, one can calculate $\frac{1}{n}\sum_{i=1}^n(y_i-f_H(\bm{x}_i))^2$ in $O(n)$ time and compare it with $B$ in $O(1)$ time. Therefore, the decision problem of CALR is in NP.

\subsection{The Optimizing Version of CALR}

In subsection 2.2, the definition of CALR problem is given. However, we still need to rewrite it as an optimizing problem to solve it using optimization solvers.

Considering MSE as the optimizing objective, $l_2$ loss function can be used for the problem. Let the number of local models be smaller than $M$, and the numbers of hyperplanes bounding each convex area be smaller than $K$, the objective function is $l(w)=\sum_{i=1}^n(\beta_0\cdot x_i + c_0 + \sum_{j=1}^M I_{ij}(\beta_j\cdot x_i + c_j) -y_i)^2$, where $\beta_j \in \mathbb{R}^d, c_j \in \mathbb{R},j=0,\cdots,M$. 
In this objective function, $I_{ij}$ is the indicator variable for $(x_i,y_i)$ of convex area $C_j$, that is $I_{ij}=\begin{cases} 1, x_i\in C_j \\ 0, x_i\notin C_j \end{cases}$.
Suppose $C_j=\cap_{k=1}^K\{y|\alpha_{jk}\cdot y + \gamma_{jk}\leq0\}$, let $I_{ijk}\in\{-1,1\}$ satisfy $I_{ijk}(\alpha_{jk}\cdot x_i+\gamma_{jk})\leq \tau$, where $\tau<0$ and $|\tau|$ is small enough. 
Then, $x_i\in C_j$ if and only if $\sum_{k=1}^K I_{ijk}=K$.
Besides, it's necessary to let $\sum_{j=1}^MI_{ij}\leq 1$ to make sure $C_1,\cdots,C_M$ are disjoint. 

Thus, the optimizing version of CALR can be given as follows.

\begin{definition}[Optimizing CALR problem]
	$$\min_{w} l(w)=\sum_{i=1}^n(\beta_0\cdot x_i + c_0 + \sum_{j=1}^M I_{ij}(\beta_j\cdot x_i + c_j) -y_i)^2$$
	$$s.t. \begin{cases}
		I_{ijk}(1-I_{ijk})=0 \\
		\sum_{j=1}^M I_{ij} \leq 1 \\
		I_{ij}=\prod_{k=1}^K I_{ijk} \\
		(I_{ijk}-\frac{1}{2})(\alpha_{jk}\cdot x_i+\gamma_{jk})\leq \tau \\
		i=1,\cdots,n;j=1,\cdots,M;k=1,\cdots,K \\
	\end{cases} $$
	where $w=(\beta,c,\alpha,\gamma,I)$ is the vector of optimizing variables, $D=\{(x_i,y_i)|i=1,\cdots,n\}$, $M,K\in\mathbb{Z}^+,\tau<0$ is given.
\end{definition}

This optimizing problem has $(d+1)(K+1)M$ variables and $n(M(2K+1)+1)$ nonlinear constraints.
It's a mix-integer programming problem, which makes it an NP-hard problem\cite{boyd2004convex}. Using optimizing solvers could only get a approximate solution. 

\section{Algorithm and Analysis}

Section 2.3 gives the optimizing version of CALR problem which can only be solved approximately. This section gives 3 accurate algorithms for the problem with different assumptions. 

Firstly, a sub-algorithm cac($\bm{DS},D$) is proposed to construct a convex-area which separated $D$ from $\bm{DS}$. Given $\bm{DS}=\{x_i\in\mathbb{R}^d|i=1,\cdots,n\}$, $D\subset \bm{DS}$ such that $|D|=n'$, cac($\bm{DS},D$) returns $\emptyset$ if the convex hull of $D$ contains $x\in \bm{DS}-D$, returns $C=\cap_{k=1}^K\{z\in\mathbb{R}^d|\alpha_k\cdot z+\gamma_k\leq 0\}$ such that $D\subset C$ and $C\cap \bm{DS}-D=\emptyset$ if there are no $x\in \bm{DS}-D$ such that $x$ is in the convex hull of $D$.

The main idea of cac($\bm{DS},D$) is to transfer the problem into a linear programming problem. For any $x_0\in \bm{DS}-D$, consider the following problem denoted as $LP(x_0,D)$:
$$\min_{w} l(w)=0$$
$$s.t. w\cdot (x_i-x_0) < 0 x_i \in D$$
where $w=(w_1,\cdots,w_d)$ is the coefficient of a hyperplane passing through $x_0$ in $\mathbb{R}^d$. 
According to the definition, $LP(x_0,D)$ is a linear programming problem in $d$-dimension with $n'$ constraints. Gauss-Seidel Method can solve it with $O(n')$ when $d$ is not very large\cite{xu2018hybrid}.
Trivially, any feasible solution of $LP(x_0,D)$ is a hyperplane that separates $x_0$ from $D$. Suppose that $\alpha_k = w_k, \gamma_k = w_k \cdot x_k$ for $x_k \in \bm{DS}-D$, where $w_k$ is one solution of $LP(x_k,D)$, $C=\cap_{k=1}^K\{z\in\mathbb{R}^d|\alpha_k\cdot z+\gamma_k\leq 0\}$ is a convex area separating $D$ from $\bm{DS}-D$.

Pseudo-code of cac is shown in the following \emph{Algorithm 1}.

\begin{algorithm}[!htb]
	\label{a_approximation}
	\SetKwBlock{DoWhile}{Do}{end}
	\caption{cac($\bm{DS},D$)}
	\KwIn{$\bm{DS}=\{x_i\in\mathbb{R}^d|i=1,\cdots,n\},D\subset\bm{DS}$;}
	\KwOut{$C=\cap_{k=1}^K\{z\in\mathbb{R}^d|\alpha_k\cdot z+\gamma_k\leq 0\}$ is a convex area or $C=\emptyset$}
	$C\leftarrow \mathbb{R}^d, U\leftarrow \bm{DS}-D$ \;
	\For{$x\in U$}{
		$\pi \leftarrow$ gslp$(x, D)$ \;
		\eIf{$\pi=\emptyset$}{\Return{$\emptyset$}\;}{\textbf{continue}\;}
		\eIf{$\pi(x)>0$}{$\pi_c(z)\leftarrow -\pi(z)$ \;}{$\pi_c(z)\leftarrow \pi(z)$ \;}
		$C\leftarrow C \cap \{z\in\mathbb{R}^d|\pi_c(z)\leq 0\}$ \;
	}
	\Return $C$\;
\end{algorithm}

In \emph{Algorithm 1}, gslp$(x,D)$ denotes any $O(n')$ time algorithm that solves the soft-margin support vector machine problem of $D$\cite{jakkula2006tutorial,nocedal2006quadratic}.

The time complexity and correctness is formally described as the following lemma without proof due to the limited space. The details of \emph{Algorithm 1} and \emph{Lemma 2} are shown in Appendix B.

\begin{lemma}
	(1)If $cac(\bm{DS},D)=\emptyset$, $D$ is not convex-area separable in $\bm{DS}$.
	
	(2)If $cac(\bm{DS},D)=C\neq\emptyset$, $D$ is separated from $\bm{DS}-D$ by $C$.
	
	(3)The time complexity of cac is $O(nn')$ when $|D|=n'$ and $|\bm{DS}|=n$.
\end{lemma}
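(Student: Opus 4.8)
The plan is to derive all three claims from one geometric fact together with a characterisation of convex-area separability. The driving fact is the separating hyperplane theorem specialised to a point versus a finite set: for $x\in\mathbb{R}^d$ and finite $D$, a hyperplane strictly separating $x$ from $D$ exists if and only if $x\notin\mathrm{conv}(D)$. Thus I would first argue that gslp$(x,D)$ returns $\emptyset$ precisely when $x\in\mathrm{conv}(D)$, and otherwise returns an affine form $\pi$ whose values have opposite signs on $x$ and on every $d\in D$. I would also record the characterisation that $D$ is convex-area separable in $\bm{DS}$ if and only if no point of $\bm{DS}-D$ lies in $\mathrm{conv}(D)$; its forward direction is immediate, since any convex area $C$ with $D\subseteq C$ must contain $\mathrm{conv}(D)$.

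For claim (1), observe that the algorithm emits $\emptyset$ only in the branch reached when gslp$(x,D)=\emptyset$ for some $x\in U=\bm{DS}-D$. By the driving fact this forces $x\in\mathrm{conv}(D)$. Any convex area $C$ containing $D$ contains $\mathrm{conv}(D)$ and therefore $x$, so no admissible $C$ can satisfy $C\cap(\bm{DS}-D)=\emptyset$; hence $D$ is not convex-area separable in $\bm{DS}$.

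For claim (2), I would proceed by a loop invariant. When the output $C$ is nonempty, every $x\in U$ yielded a separating form, and the orientation step adjusts its sign so that the halfspace $\{z\mid\pi_c(z)\le 0\}$ appended to $C$ contains all of $D$ while placing that particular $x$ strictly outside. The invariant ``$C\supseteq D$, and $C$ excludes every already-processed point of $U$'' is preserved by each iteration: intersecting with a halfspace that still contains $D$ keeps $D\subseteq C$, and the new halfspace removes the current $x$. At termination this gives $D\subseteq C$ and $x\notin C$ for all $x\in U$, i.e.\ $C\cap(\bm{DS}-D)=\emptyset$, so $C$ separates $D$ from $\bm{DS}-D$.

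Claim (3) is a counting argument: the loop iterates once per point of $U$, so $|U|=n-n'$ times, and each iteration costs one gslp call at $O(n')$ plus $O(d)$ work for the sign test and the halfspace intersection, the latter absorbed under the standing assumption that $d$ is small; summing gives $(n-n')\cdot O(n')=O(nn')$. I expect the main obstacle to lie in claim (1): since gslp is phrased as a soft-margin SVM solver, one must justify that an empty return genuinely certifies infeasibility of strict separation, equivalently $x\in\mathrm{conv}(D)$, rather than a feasible solution that merely misclassifies with positive slack. Tying the termination condition of the SVM program to the separating hyperplane theorem is the delicate step; once it is fixed, together with the orientation convention, claims (2) and (3) follow by the bookkeeping above.
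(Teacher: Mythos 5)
Your proposal is correct and follows essentially the same route as the paper's proof (given in Appendix~B via Theorem~4): the point-versus-convex-hull separating hyperplane theorem for claim (1), the observation that the output is an intersection of halfspaces each containing $D$ and excluding one processed point for claim (2), and the same per-iteration counting for claim (3). The subtlety you flag about an ``empty'' return from a soft-margin solver genuinely certifying $x\in\mathrm{conv}(D)$ is the same issue the paper resolves by the explicit sign test $\pi(x')\pi(x)>0$ in the cacs variant, so your treatment matches the paper's in substance.
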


By means of the algorithm cac, a naive algorithm for the original CALR problem with $M=1$ is given as \textit{Algorithm 4}, and the time complexity is $O(n2^nd^3)$. The detailed explanation and analysis of \textit{Algorithm 4} is shown in Appendix C due to the limited space. 

\subsection{Algorithms of CALR}
This section discusses a simpler case of CALR problem. Firstly, we give the following definition of convex-area separable for datasets.

\begin{definition}[Convex-Area Separable]
	Given $D=\{(x_i,y_i)|x_i\in\mathbb{R}^d,y_i\in\mathbb{R},i=1,\cdots,n\}$, if there exists a CALR function $F_H$ with $|H|=M+1$ satisfying $\forall x\in S_i,|f_i(x)-y|<\varepsilon$ for $i=1,\cdots,M+1$, where $\varepsilon>0$ is small enough, and $\Vert f_i-f_j\Vert\geq \delta$, $D$ is said to be $(M,\varepsilon,\delta)$-Convex-Area Separable.
\end{definition}

Intuitively, if $D$ is Convex-Area Separable, $D$ could be separated into $M+1$ subsets and $M$ of them could be bounded in disjoint convex areas, and different subsets fit different linear functions.
Such assumption is used in data description tasks and regression modelling for datasets with small errors.
The \textit{Algorithm 2} is designed for Convex-Area Separable datasets for any given $M$. The sub-algorithms used in Line 17 and 26 are shown in \emph{Algorithm 3} and \emph{Algorithm 4} respectively.

\begin{algorithm}[!h]
	\label{a_approximation}
	\SetKwBlock{DoWhile}{Do}{end}
	\caption{casCALR($\bm{DS},M,\tau,\varepsilon,\delta$)}
	\KwIn{$\bm{DS}=\{(x_i,y_i)|x_i\in\mathbb{R}^d,y_i\in\mathbb{R},i=1,\cdots,n\}$ Convex-Area Separable;}
	\KwOut{$\bm{H}=\{(f_i,C_i)|i=0,\cdots,M\}$ such that $F_H \in CALF$, minimizing $\sum_{i=1}^n (F_H(x_i)-y_i)^2$}
	$\bm{H}=\emptyset, F=\emptyset, DS'\leftarrow \bm{DS}$ \;
	\textbf{While} ({$|F|<M$ or $DS'\neq \emptyset$})
	\DoWhile{
		$D \leftarrow$ uniformly sample $d+1$ data points from $DS'$, $f_t \leftarrow$ lr($D$)\;
		\eIf{$p_F(f_t)\geq\tau$}{\textbf{continue}\;}{
			$C_t \leftarrow$ cac($DS',D$)\;
			\eIf{$C_t = \emptyset$}{\textbf {continue}\;}{
				$m\leftarrow |\{f\in F|\Vert f-f_t \Vert\geq\delta\}|$ \;
				\eIf{$m<|F|$}{\textbf{continue}\;}{$F\leftarrow F\cup\{f_t\}$ \;
					$DS'\leftarrow \{(x,y)\in DS'| |y-f(x)|\geq\varepsilon\}$\;}
			}
		}
	}
	$DS'\leftarrow$ distinct$(F,\bm{DS})$ \;
	\For{$f\in F$}{
		$D_f=\{(x,y)\in DS'||y-f(x)|<\varepsilon\}$, $C\leftarrow$ cac($DS',D_f$)\;
		\eIf{$D - C\cap DS' \neq \emptyset$}{
			$f_0 \leftarrow f$ \;
		}{
			$DS' \leftarrow DS' - DS'\cap C$\;
			$\bm{H}\leftarrow \bm{H} \cup \{(f,C)\}$ \;
		}
	}
	$DS' \leftarrow DS' - \{(x,y)\in DS'||y-f_0(x)|<\epsilon\}$ \;
	$\bm{H} \leftarrow \bm{H} \cup post(\bm{H},DS',\varepsilon) \cup \{(f_0,\mathbb{R}^d-\cup_C\{x\in C|(f,C)\in\bm{H}\})\}$ \;
	\Return $\bm{H}$\;
\end{algorithm}

As shown by the pseudo-code, casCALR firstly samples subsets $D$ which can be separated from $\bm{DS}-D$ by hyperplanes and $|D|=d+1$, to find the $M$ different functions $f_1,\cdots,f_M$ in Line 2-16.
In Line 17, the algorithm invoked distinct($F,\bm{DS}$) to remove the data points fitting more than one functions in $F$.
Next in Line 18-25, casCALR finds all data points fitting $f_i$ for $i=1,\cdots,M$, and uses the data points to construct the convex area $C$ for $f_i$. Now there's only a few data points remaining. 

Finally, casCALR invokes a sub-algorithm post$(\bm{H},DS',\varepsilon)$ to add the left data points into output $\bm{H}$ in Line 26-27.
The correctness of casCALR is formally shown in \textit{Theorem 2}. The proof of \textit{Theorem 2} and detailed explanation of casCALR is shown in Appendix D.

\begin{algorithm}[!htb]
	\label{a_approximation}
	\SetKwBlock{DoWhile}{Do}{end}
	\caption{distinct($F,\bm{DS}$)}
	\KwIn{$\bm{DS}=\{(x_i,y_i)|x_i\in\mathbb{R}^d,i=1,\cdots,n\},F={f_i|1\leq i \leq M}$;}
	\KwOut{$DS'\subset \bm{DS}$}
	$DS' \leftarrow \emptyset, D \leftarrow {d=(x,y,0)|(x,y)\in\bm{DS}}$\;
	\For{$f\in F$}{
		\For{$d\in D$}{
			\eIf{$|y-f(x)|<\epsilon$}{
				\eIf{$d=(x,y,0)$}{$ d \leftarrow (x,y,1)$ \;}{$ d \leftarrow (x,y,-1)$ \;}
			}{\textbf{continue} \;}
		}
	}
	$DS' \leftarrow {(x,y)|(x,y,z)\in D, z=1}$ \;
	\Return $DS'$\;
\end{algorithm}

\begin{algorithm}[!htb]
	\label{a_approximation}
	\SetKwBlock{DoWhile}{Do}{end}
	\caption{post($\bm{H},D,\varepsilon$)}
	\KwIn{$\bm{H}=\{(f_i,H_i)|i=1,\cdots,M\},D\subset\bm{DS}$;}
	\KwOut{$\bm{H}'=\{(g_i,H'_i)\}$ where $H'_i$ is a polyhedron}
	$F\leftarrow \{f|(f,H)\in\bm{H}\}$ \;
	\For{$f\in F$}{
		$D_1 \leftarrow \{(x,y)\in D||f(x)-y|<\varepsilon\}$ \;
		\For{$f_t\in F$ and $f_t\neq f$}{
			$D_2 \leftarrow \{(x,y)\in D||f_t(x)-y|<\varepsilon\}$ \;
			\eIf{$D_1\cap D_2 = \emptyset$}{\textbf{continue}\;}{
				$C_t \leftarrow$ cac($D, D_1 \cap D_2$) \;
				$\bm{H}'\leftarrow \{(f,C_t)\}$\;
				$D \leftarrow D - D_1\cap D_2$\;}
		}
	}
	\Return $\bm{H}'$\;
\end{algorithm}

\begin{theorem}
	(1)If $DS'=$distinct$(F,\bm{DS})$, and $D_0',\cdots,D_M'$ are subsets fitting $f_0,f_1,\cdots,f_M\in F$ respectively, there exists at most one $D_i'$ can not be separated from $DS'-D_i'$ by a convex area.
	
	(2)After Line 26 of casCALR, $\bm{H}=\{(f_i,C_i)|i=1,\cdots, M\}$, where $C_i$s are disjoint convex areas bounded by hyperplanes, and $\{(x,y)\in \bm{DS}||f_0(x)-y|<\varepsilon\}$ might not be separated from $\bm{DS}$ by hyperplanes.
\end{theorem}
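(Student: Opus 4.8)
\noindent The plan is to prove both parts by tying the algorithm's sets $D_i'$ to the ground-truth regions guaranteed by the Convex-Area Separable assumption. Since $\bm{DS}$ is $(M,\varepsilon,\delta)$-Convex-Area Separable, there exist pairwise disjoint convex areas $C_1,\ldots,C_M$, the complementary region $C_0=\Omega-\cup_{i=1}^M C_i$, and $\delta$-separated linear functions $f_0,f_1,\ldots,f_M$, such that every data point whose $x$ lies in $C_i$ satisfies $|f_i(x)-y|<\varepsilon$. This is the structure I will exploit throughout.

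For part (1), I would first establish the inclusion $D_i'\subseteq C_i$ for every $i$. Take any $(x,y)\in D_i'$; by the definition of \emph{distinct} this point fits exactly one function of $F$, namely $f_i$. If its $x$-coordinate lay in the ground-truth region $C_k$, then $|f_k(x)-y|<\varepsilon$, so the point would also fit $f_k$, and the $\delta$-separation $\Vert f_k-f_i\Vert\geq\delta$ forces $k=i$, since otherwise the point fits two distinct functions and is discarded by \emph{distinct}. Hence $D_i'\subseteq C_i$. Now for $i\geq 1$ the region $C_i$ is convex, so $\mathrm{conv}(D_i')\subseteq C_i$; because the $C_j$ are pairwise disjoint and $C_0$ is disjoint from each $C_i$, no point of $DS'-D_i'$ lies in $\mathrm{conv}(D_i')$, which is precisely the condition under which $\mathrm{cac}$ succeeds by \emph{Lemma 2}. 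Thus each $D_i'$ with $i\geq1$ is convex-area separable, and the only set that can fail is $D_0'\subseteq C_0$, whose region is a complement of convex sets and need not be convex. This yields the ``at most one'' conclusion.

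For part (2), I would trace Lines 18--25 and invoke part (1). For each $f\in F$ the loop forms the fitting set $D_f$ and calls $\mathrm{cac}(DS',D_f)$. By part (1) at most one such call fails to cover $D_f$ (Line 20), and that unique function is recorded as $f_0$, while each of the remaining $M$ functions receives a genuine convex area $C_i$ from $\mathrm{cac}$; by \emph{Lemma 2} this $C_i$ is an intersection of half-spaces, i.e.\ a polyhedral convex area, and is appended to $\bm{H}$. Disjointness of the $C_i$ follows because each $D_i'$ sits inside the disjoint ground-truth region $C_i$, $\mathrm{cac}(DS',D_f)$ returns an area meeting $DS'$ only in $D_f$, and the update $DS'\leftarrow DS'-DS'\cap C$ successively removes already-covered points, so the constructed areas carve out disjoint point sets lying in disjoint convex regions. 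After Line 26 removes the points fitting $f_0$, the collection $\bm{H}$ is exactly $\{(f_i,C_i)\mid i=1,\ldots,M\}$, and the residual default set $\{(x,y)\mid |f_0(x)-y|<\varepsilon\}\subseteq C_0$ is precisely the set that part (1) flagged as possibly non-separable by hyperplanes.

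\noindent The main obstacle I anticipate is the rigorous identification $D_i'\subseteq C_i$: the functions in $F$ are found by sampling and are only guaranteed $\delta$-separated rather than literally equal to the ground-truth $f_i$, so care is needed to argue, using the $\delta$-separation together with the smallness of $\varepsilon$, that the function fitting a \emph{distinct} point coincides with the unique ground-truth function of its region. A secondary subtlety is upgrading ``the constructed areas separate disjoint point sets'' to ``the convex areas $C_i$ are themselves disjoint,'' which I would handle by appealing to the disjointness of the underlying ground-truth regions that each $\mathrm{cac}$ output respects.
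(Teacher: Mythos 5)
Your proposal is correct and follows essentially the same route as the paper's own argument (Theorems 5--6 in Appendix D): identify each set $D_i'$ produced by \emph{distinct} with a ground-truth region of the Convex-Area Separable decomposition, use convexity of $C_1,\ldots,C_M$ to conclude that $\mathrm{cac}$ succeeds on those, and attribute the single possible failure to the complement region $C_0$, which need not be convex and is thereby identified with $f_0$. Your direct inclusion $D_i'\subseteq C_i$ is a slightly cleaner packaging of the paper's case analysis of points near the hyperplanes $f_i=f_j$, and the two informalities you flag (learned versus ground-truth functions, and geometric disjointness of the output areas) are present in the paper's proof as well.
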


\subsection{Analysis of Time Complexity}
This section discusses the time complexity of casCALR.
Firstly, the following Lemma 3 is given to analyze the expected time complexity.

\begin{lemma}
	Suppose that a set $D$ is separated in $M$ disjoint subsets $D_1,\cdots,D_M$, where $|D|=n$ and $|D_i|=N_i\geq d+1, i=1,\cdots,M$. Let $D'_1,D'_2,\cdots,D'_t,\cdots$ be a uniformly sampled series of subsets, such that $D'_t\subset D$ and $|D'_t|=d+1$. Let event $A$ be $D_t'\subset D_i$ for some $i$. Let $T$ be the smallest $t$ such that $A$ happens, then $\mathbb{E}[T]=O(q^{d+1}M^{d+1})$, where $\frac{1}{q}=1-\frac{Md}{n}$. 
\end{lemma}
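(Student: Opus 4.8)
The plan is to recognize $T$ as a geometric waiting time and reduce the whole statement to a single lower bound on the per-trial success probability. Since each $D_t'$ is an independent uniform draw of $d+1$ points from the \emph{fixed} set $D$ with its \emph{fixed} partition, the event $A$ occurs on every trial with one and the same probability $p$, independently across trials. Hence $T$ is geometrically distributed with parameter $p$, so $\mathbb{E}[T]=1/p$, and everything comes down to showing $1/p\le q^{d+1}M^{d+1}$.

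First I would write $p$ exactly. Drawing a $(d+1)$-subset uniformly, the count of favorable outcomes (all $d+1$ points inside a common block $D_i$) over the total count gives
\[
p=\frac{\sum_{i=1}^M \binom{N_i}{d+1}}{\binom{n}{d+1}}.
\]
To avoid handling the exact distribution of the block sizes $N_i$, I would simply discard all blocks but the largest, obtaining $p \ge \binom{N_{\max}}{d+1}/\binom{n}{d+1}$ with $N_{\max}=\max_i N_i$. Because the maximum of the block sizes is at least their average, $N_{\max}\ge n/M$, and since $\binom{x}{d+1}$ is increasing for $x\ge d+1$ this yields $p \ge \binom{n/M}{d+1}/\binom{n}{d+1}$.

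The core estimate is then an elementary ratio bound. I would write
\[
\frac{\binom{N}{d+1}}{\binom{n}{d+1}}=\prod_{j=0}^{d}\frac{N-j}{n-j}
\]
and show each factor is at least $\tfrac{N-d}{n}$: cross-multiplying, the required $n(N-j)\ge (n-j)(N-d)$ reduces to $n(d-j)+j(N-d)\ge 0$, which holds because $0\le j\le d$ and $N\ge d+1>d$. Taking $N=N_{\max}\ge n/M$ gives $p \ge \big(\tfrac{n/M-d}{n}\big)^{d+1}$, hence
\[
\mathbb{E}[T]=\frac1p \le \Big(\frac{n}{n/M-d}\Big)^{d+1}=\Big(\frac{nM}{n-Md}\Big)^{d+1}=(qM)^{d+1}=q^{d+1}M^{d+1},
\]
using $\tfrac1q=1-\tfrac{Md}{n}$, i.e. $q=\tfrac{n}{n-Md}$. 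The hypothesis $N_i\ge d+1$ guarantees $n\ge M(d+1)>Md$, so the denominator $n/M-d=(n-Md)/M\ge 1$ is positive and the bound is meaningful.

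The calculation is otherwise routine; the only real decision that keeps it clean is throwing away all but the largest block rather than trying to minimize $\sum_i\binom{N_i}{d+1}$ exactly. The exact-minimization route (via convexity of $x\mapsto\binom{x}{d+1}$ on $(d,\infty)$ together with Jensen, which forces the balanced case $N_i=n/M$) also works and in fact yields the slightly sharper $q^{d+1}M^{d}$, but at the cost of an extra convexity argument; I would relegate it to a remark, since the single-block bound already delivers the stated $O(q^{d+1}M^{d+1})$.
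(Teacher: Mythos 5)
Your proposal is correct and follows essentially the same route as the paper's own proof: express $p$ exactly as $\bigl(\sum_i \binom{N_i}{d+1}\bigr)/\binom{n}{d+1}$, use $\mathbb{E}[T]=1/p$ for the geometric waiting time, keep only the largest block with $N_{\max}\ge n/M$, and bound the resulting ratio of falling factorials by $\bigl(\tfrac{n}{N_{\max}-d}\bigr)^{d+1}\le (qM)^{d+1}$. The only cosmetic difference is that you bound the ratio factor-by-factor via cross-multiplication while the paper bounds numerator and denominator separately; both yield the identical intermediate estimate.
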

\begin{proof}
	For any $t$, $P(A)=p=\frac{C_{N_1}^{d+1}+\cdots+C_{N_M}^{d+1}}{C_n^{d+1}}$. As the definition of expectation, $\mathbb{E}[T]=p+2(1-p)p+\cdots+t(1-p)^{t-1}p+\cdots$. This is the mathematical expectation of geometry distribution\cite{bertsekas2008introduction}. So $\mathbb{E}[T]=\frac{1}{p}$.
	
	Thus, $\mathbb{E}[T]=\frac{C_n^{d+1}}{C_{N_1}^{d+1}+\cdots+C_{N_M}^{d+1}}=\frac{n(n-1)\cdots (n-d)}{N_1(N_1-1)\cdots(N_1-d)+\cdots+N_M(N_M-1)\cdots(N_M-d)}$. Let $N=\max\{N_1,\cdots,N_M\}$, and reduce the denominator, then there is $\mathbb{E}[T]\leq \frac{n(n-1)\cdots (n-d)}{N(N-1)\cdots(N-d)}$. Furthermore, $\frac{n(n-1)\cdots (n-d)}{N(N-1)\cdots(N-d)}\leq \frac{n^{d+1}}{N(N-1)\cdots(N-d)}\leq \frac{n^{d+1}}{(N-d)^{d+1}}=(\frac{n}{N-d})^{d-1}$.
	
	By the pigeonhole principle, $N\geq \frac{n}{M}$ since $N$ is the biggest $N_i$. So $\frac{n}{N-d}\leq \frac{n}{n/M-d}=\frac{M}{1-Md/n}$. Let $\frac{1}{q}=1-\frac{Md}{n}$, then $Md=n(1-\frac{1}{q})$ and $\mathbb{E}[T]\leq (qM)^{d+1}$.
	
	Therefore, $\mathbb{E}[T]=O((qM)^{d+1})$. Besides, $\mathbb{E}[T]=O(2^{d+1}M^{d+1})$ when $n\geq 2Md$, which is always satisfied in practice.
\qed\end{proof}

The expected time complexity of CALR is given in the following Theorem 3.

\begin{theorem}
	The expected time complexity of casCALR is $O(q^{d+1}M^{d+2}nd+M^2nn')$, where $\frac{1}{q}=1-\frac{(M+1)d}{n}$ and $n'=max_i |C_i \cap \bm{DS}|$.
\end{theorem}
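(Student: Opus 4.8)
The plan is to bound the running time of \texttt{casCALR} block by block, isolating the two dominant contributions that produce the terms $q^{d+1}M^{d+2}nd$ and $M^2nn'$. First I would fix the cost of a single iteration of the sampling loop in Lines 2--16. It is the sum of: the call \texttt{lr}($D$) on $d+1$ points, which by the stated $O(d^2|D|+d^3)$ bound is $O(d^3)$; the evaluation of $p_F(f_t)$, which is $O(d)$ on the sample; the call \texttt{cac}($DS',D$) with $|D|=d+1$, which by Lemma 2(3) is $O(n(d+1))=O(nd)$; and the norm comparisons computing $m$, which cost $O(|F|d)=O(Md)$. Under the standing assumption that $d$ is small (precisely $d^2\le n$, consistent with $n\ge 2Md$ used in Lemma 3), the per-iteration cost is dominated by \texttt{cac} and is $O(nd)$.

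Next I would count the expected number of iterations of that loop. Since $\bm{DS}$ is Convex-Area Separable, its points split into $M+1$ true subsets (the $M$ local pieces plus the default region). A uniformly sampled $D$ of size $d+1$ landing entirely inside one true subset produces an $f_t$ that fits that subset, passes the $p_F$ test, and is accepted by \texttt{cac}; this is exactly the event $A$ of Lemma 3. Applying Lemma 3 with the $M+1$ subsets gives expected waiting time $O(q^{d+1}(M+1)^{d+1})=O(q^{d+1}M^{d+1})$ for one success, where $\tfrac1q=1-\tfrac{(M+1)d}{n}$ as in the statement. Because the loop must collect all $M$ distinct functions, I would use linearity of expectation over the (at most) $M$ successive waiting times, upper-bounding each by the single worst-case $q$ of the statement, to obtain an expected iteration count $O(Mq^{d+1}M^{d+1})=O(q^{d+1}M^{d+2})$. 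Multiplying by the $O(nd)$ per-iteration cost yields the first term $O(q^{d+1}M^{d+2}nd)$.

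Then I would bound the deterministic remainder. The call \texttt{distinct}($F,\bm{DS}$) in Line 17 scans each of the $M$ functions against all $n$ points at cost $O(d)$ each, giving $O(Mnd)$, which is absorbed into the first term. The loop in Lines 18--25 runs $M$ times, and each iteration's dominant cost is \texttt{cac}($DS',D_f$) on a set of size up to $n'=\max_i|C_i\cap\bm{DS}|$, which by Lemma 2(3) is $O(nn')$, contributing $O(Mnn')$. Finally \texttt{post} (Algorithm 4) has a double loop over $F$, i.e. $O(M^2)$ calls to \texttt{cac}, each again $O(nn')$, giving $O(M^2nn')$, which dominates the Lines 18--25 term and produces the second term. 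Summing the two dominant contributions gives $O(q^{d+1}M^{d+2}nd+M^2nn')$.

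The main obstacle will be the expected-iteration count for the sampling loop, specifically making the link between Lemma 3's abstract event $A$ and the algorithm's actual acceptance rule watertight: I must argue that a sample falling inside a single true subset is always accepted (it passes $p_F$, yields a nonempty \texttt{cac}, and is recorded as a new function until all $M$ are found), and that the residual-set shrinkage in Line 15 only decreases both $n$ and the number of remaining subsets, so that bounding every individual waiting time by the one worst-case $q$ of the statement is legitimate rather than an underestimate. Once this coupling is established, the remaining steps are routine accounting.
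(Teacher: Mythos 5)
Your proposal follows essentially the same route as the paper's proof: bound the per-iteration cost of the sampling loop by the dominant \texttt{cac} call, invoke Lemma~3 with $M+1$ subsets to get an expected waiting time of $O(q^{d+1}(M+1)^{d+1})$ per discovered function, multiply by $M$ for the $M$ functions, and then add the deterministic costs of \texttt{distinct} ($O(Mn)$), the convex-area construction ($O(Mnn')$), and \texttt{post} ($O(M^2nn')$). Your treatment is in fact slightly tidier in two spots the paper glosses over --- absorbing the $d^3$ term into $nd$ under $d^2\le n$ (the paper's proof actually ends with $O(q^{d+1}M^{d+2}(nd+d^3)+M^2nn')$, not quite the stated bound), and flagging that the coupling between Lemma~3's event $A$ and the algorithm's acceptance rule, as well as the shrinkage of $DS'$, needs justification --- but these are refinements of the same argument, not a different one.
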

\begin{proof}
	From the pseudo-code of casCALR, it firstly finds every $f_i$, by randomly sampled subsets $D$ with $|D|=d+1$. Suppose the biggest expected time of sampling $D$ for $f_i$ is $T$. For each suitable $D$, casCALR costs $O(d^3)$ to construct linear regression model on $D$, $O(n)$ time to construct $D'$ and $U'$ by checking the predicting error of every $(x,y)\in \bm{DS}$, $O(n(d+1))$ time to construct the convex area containing $D$, $O(M)$ time to test whether $f_i$ is contained in $F$, $O(n)$ time to remove the data points fitted by $f_i$. Thus, the time complexity of Line 3-16 for each $D$ is $O(d^3+n+n(d+1)+M+n)=O(nd+d^3)$. These steps will be repeated for $M$ times to find all $f_i$ in $\bm{H}$. Suppose that $T$ is the max sample time of constructing every $f_i$, the time complexity of Line 2-16 is $O(MT(nd+d^3))$.
	
	Next, casCALR finds $C_i$ for each $f_i\in F$. Firstly casCALR invoked the sub-algorithm distinct$(F,\bm{DS})$ to find all data points fitted by $f_i$, and removes data points fitted by $f_j, j\neq i$ to make sure the convex area $C_i$ containing them is rightly constructed. From the pseudo-code of Algorithm 3, it's obvious that the time complexity of Line 17 is $O(Mn)$.
	
	Then casCALR costs $O(nn')$ time to construct convex area $C_i$ where $n'=|D|$, and $O(n)$ time to remove the data points fitted by $f_i$. The total time complexity of Line 18-25 is $O(M(nn'+n))=O(Mnn')$.
	
	In the end, casCALR deals with the data points fitted by more than one $f\in F$. The sub-algorithm post($\bm{H},DS',\varepsilon$) is invoked, and the time complexity of it is $O(M^2nn')$ from Appendix D.
	
	Adding the four parts of the algorithm, the total time complexity of casCALR is $O(MT(nd+d^3)+Mn+Mnn'+M^2nn')=O(MT(nd+d^3)+M^2nn')$. It's known that $\mathbb{E}[T]=q^{d+1}(M+1)^{d+1}$ where $\frac{1}{q}=1-\frac{(M+1)d}{n}$ from Lemma 3. Therefore, the final expected time complexity of casCALR is $O(q^{d+1}M^{d+2}(nd+d^3)+M^2nn')$.
\qed\end{proof}

Noticing that $Mn'=o(n)$ since $C_0$ contains most data points of $D$. Therefore, generally the expected time complexity of casCALR is $o(Mn^2)$ when $d$ is not very large.
Besides, a much more concise cas2 in \textit{Algorithm 7} can be proposed if $\bm{DS}$ is convex-area separable and $M=1$ in Appendix E. The expected time complexity of cas2 is $o(2^{d+1}d^3+n^2)$.

Besides, there are several methods to estimate good enough $\varepsilon, \tau$ and $\delta$ if they are not given. 
In practice, $\tau=0.05$ and $\delta=0.5$ is an accurate enough threshold for hypothesis testing\cite{book2021lra}.
Once a linear regression model $f$ on $D$ satisfies $p_f<0.05$, one can have $\varepsilon=3\sqrt{\frac{1}{|D|}\sum_{(x,y)\in D}(f(x)-y)^2}$ as an accurate estimator.
Therefore, $\varepsilon, \tau$ and $\delta$ is not necessary to be given for casCALR.

\section{Conclusion and Future Work}
This paper introduces a new function class called Convex-area-wise Linear functions(CALF), which has high interpretability and strong expressive ability. This paper proves that CALF can interpolate any given finite datasets, and can represent discontinuous piece-wise linear functions. This paper formally proposed convex-area-wise linear regression(CALR) problem for data analysis and prediction, along with an optimizing version of it to approximately solve the problem using existing optimization solvers.

Besides, this paper proposed several algorithms to solve CALR problem under different assumptions. 
If assuming the given datasets are convex-area separable, an $O(q^{d+1}M^{d+2}nd+M^2nn')$ expected time algorithm casCALR is proposed for the case when $M$ is known, and an $o(2^{d+1}d^3+n^2)$ expected time algorithm cas2 is proposed when $M=1$. Both casCALR and cas2 are accurate algorithms for the problem. Since $2^{d+1}d^3, q^{d+1}M^{d+2}nd, Mn' < n^2$ when $d$ is relatively small, the time complexity of them can reach almost $o(Mn^2)$ in practice.

For future work and challenge of CALF and CALR problem, reduction of time complexity is the most important part. $O(q^{d+1}M^{d+2}nd+M^2nn')$ is lower than some currently existing regression methods such as PLDC\cite{siahkamari2020piecewise} when $d$ and $M$ are relatively small, but it's still high for common big datasets. A linear time, even sub-linear time approximate algorithms for CALR is urgently needed. Secondly, an efficient algorithm for CALR problem when datasets are not convex-area-separable is also a challenge.
Besides, a searching algorithm for best $M$ is also needed to mitigate over-fitting when $M$ is unknown. Taking each data point in given $\bm{DS}$ as a local model can reach zero error, but it causes serious over-fitting to the regression model, which makes it important to adjust $M$ flexibly.

%
%
%
\bibliographystyle{splncs04}
\bibliography{reference}

\clearpage
\section*{Appendix}
\appendix

\section{Illustrations of CALF}
The following Figure 1 and 2 give an illustration of multiple-model linear regression and CALF, where $x=(x_1,x_2)$ and $y=f(x)=f(x_1,x_2)$.

 \begin{figure}
 	\centering
 	\includegraphics[width=0.9\linewidth]{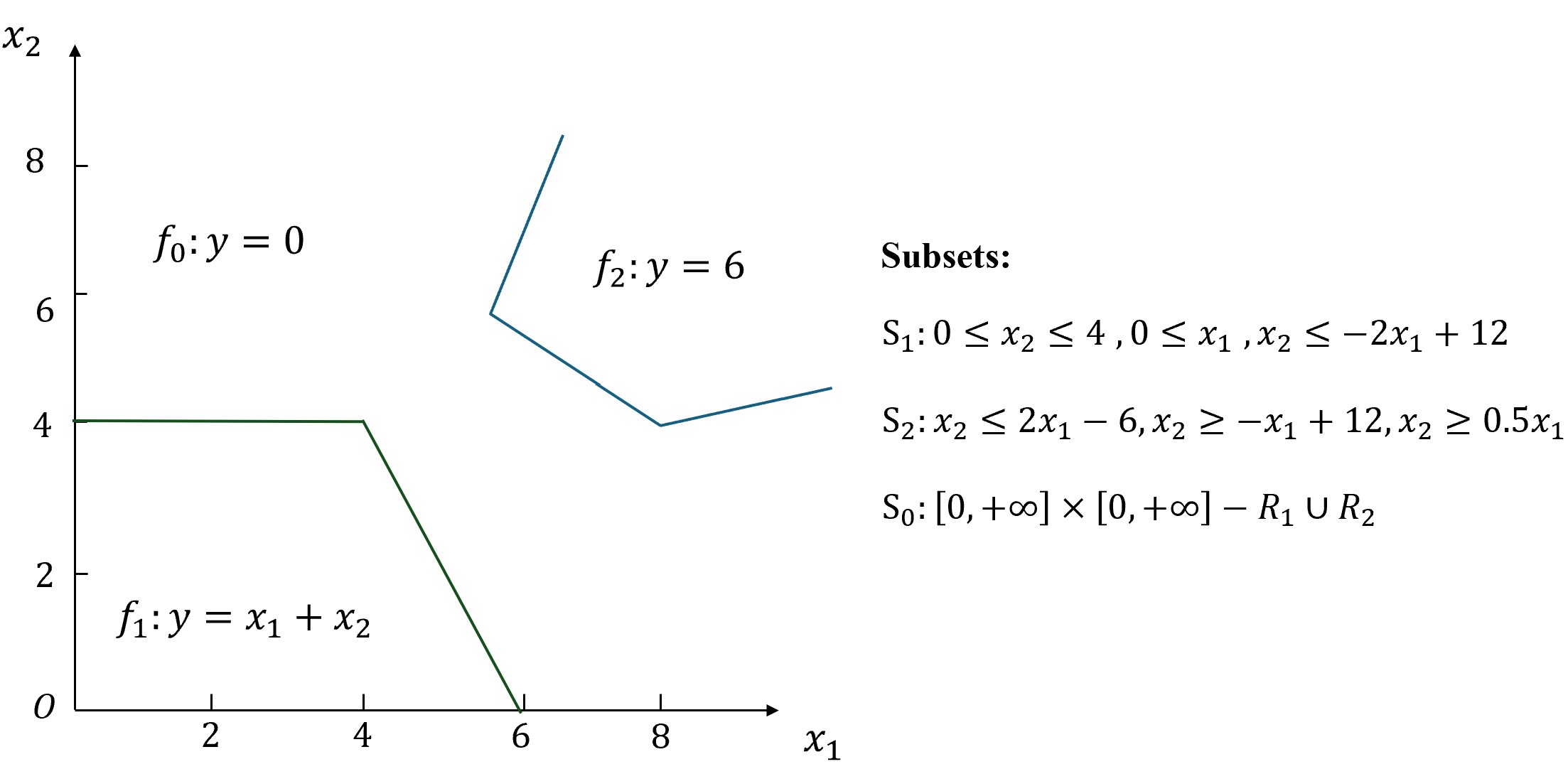}
 	\caption{An illustration of multiple-model linear regression. Modelling $y$ with explain variables $(x_1,x_2)$, three local models with three subsets used to best fit the dataset.}
 	\label{fig:DPRV}
 \end{figure}

As shown in Figure 1, given a 3-dimensional $\bm{DS}=\{(y, x_1 , x_2)\} \subset \mathbb{R}\times [0, +\infty] \times [0, +\infty]$, we can use three linear regression models, $y=x_1+x_2$ for $S_1= \{ (x_1,x_2) | 0\leq x_2\leq 4,0\leq x_1,x_2\leq -2x_1+12  \}$, $y=6$ for $S_2 = \{(x_1,x_2)|x_2\leq 2x_1-6,x_2\geq -x_1+12,x_2\geq 0.5x_1\}$ and $y=0$ for $S_3 = [0, +\infty] \times [0, +\infty]- S_1\cup S_2 $, to accurately model $\bm{DS}$. 
And the final multiple-model linear regression model can be represented by $f(x)=I_{S_1}(x_1+x_2)+6I_{S_2}+0I_{S_0}$.
Since the three models are very different, it is impossible to model $\bm{DS}$ using any single model as accurate as the three models do.

The Figure 2 gives an example of convex-area-wise linear function. The partition of $(x_1,x_2)$ is the same as in Figure 1. So there is $H=\{(f_1,S_1),(f_2,S_2),(f_0,S_0)\}$, where $f_1(x)=x_1+x_2$, $f_2(x)=6$, $f_0(x)=0$. And $S_1 = \{-x_1\leq 0\}\cap\{-x_2\leq 0\}\cap\{x_2\leq 4\}\cap \{2x_1+x_2-12\leq 0\}$, $S_2 = \{-2x_1+x_2+6\leq 0\}\cap\{-x_1-x_2+12\leq 0\}\cap\{\frac{1}{2}x_1+x_2\leq 0\}$, $S_0=\mathbb{R}^d-S_1\cup S_2$.
Obviously, $S_1$ is bounded by 4 lines and $S_2$ is bounded by 3 lines and $S_1\cap S_2 \cap S_0=\emptyset$. Therefore, $F_H=I_{S_1}(x_1+x_2)+6I_{s_2}$ is a convex-area-wise linear function.

\begin{figure}
	\centering
	\includegraphics[width=0.9\linewidth]{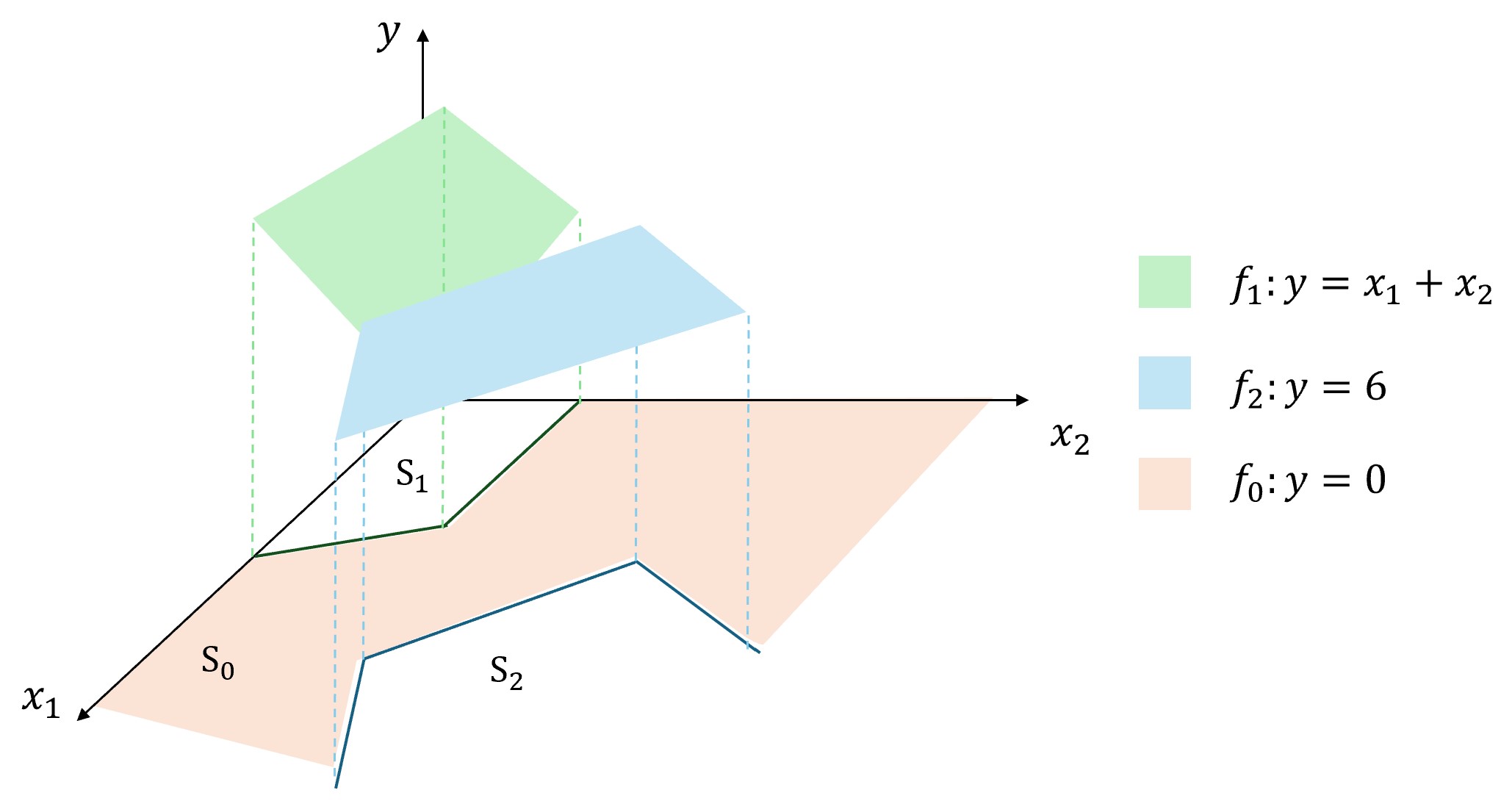}
	\caption{An illustration of CALF, with $H=\{(f_1,S_1),(f_2,S_2),(f_0,S_0)\}$.}
	\label{fig:DPRV}
\end{figure}

\section{Algorithm for finding convex area containing $D$}
The \emph{Algorithm 1} in Section 3 can separate $D$ from $\bm{DS}-D$ by a convex area, if and only if the convex hull of $D$ contains no points of $\bm{DS}-D$. The algorithm is efficient when $d$ is not large. For further discussion of \emph{Algorithm 1}, we firstly consider a similar method using support vector machine rather than linear programming.

Consider the following soft-margin Support Vector Machine Problem without kernel methods(SVM): given datasets $D=\{(x_i,y_i)|x_i \in \mathbb{R}^d, y_i\in\{1,-1\}\}$, return an hyperplane $\pi:\alpha\cdot x+\gamma=0$ where $\alpha\in\mathbb{R}^d,\gamma\in\mathbb{R}$, maximizing the soft margin between class $y=1$ and $y=-1$. This soft-margin SVM is equivalent to the following problem\cite{nocedal2006quadratic,jakkula2006tutorial}.

\begin{definition}[soft-margin SVM problem]
	$$\min_{\alpha,\gamma,\xi} \frac{1}{2}\Vert \alpha \Vert + C \sum_{i=1}^n\xi_i$$
	$$s.t. \begin{cases}
		y_i(\alpha\cdot x_i + \gamma)\geq 1-\xi_i \\
		\xi_i \geq 0 \\
		i=1,\cdots,n\\
	\end{cases} $$
	where $x,\alpha\in\mathbb{R}^d,\gamma \in \mathbb{R}, C>0$ is a given hyper-parameter.
\end{definition}

It's a linear constraint quadratic programming, with $n+d+1$ optimizing variables and $2n$ constraints. 
And it can be solved in $O((n+d+1)^3)$ time using conjugate gradient or Cholesky splitting methods\cite{nocedal2006quadratic}.

We denote $svm(D)$ as any $O((n+d+1)^3)$ time algorithm solving SVM problem, taking $D$ as input and a hyperplane $\pi$ as output. Besides, it's easy to make sure no $x\in D$ such that $\pi(x)=0$. If so, substitute $\pi(x)$ with $\pi'(x)=\pi(x)+\epsilon$ can solve it, where $|\epsilon|$ is small enough. Therefore, we always assume that $\pi=svm(D)$ satisfying that no $x\in D$ such that $\pi(x)=0$.

As for the output $\pi$ of the problem, $\pi$ perfectly separates $D$ into $D_1=\{(x,y)\in D|y=1\}$ and $D_2=\{(x,y)\in D|y=-1\}$ if $D$ is linear separable. 
Let $\pi(x)=\alpha\cdot x+\gamma$, there must exists $(x_1,1),(x_2,-1)\in D$ such that $\pi(x_1)\cdot \pi(x_2)>0$, if $D$ is not linear separable.

Taking $svm(D)$ as a sub-algorithm, A nearly $O(n)$ time algorithm can be proposed to test whether a subset $S\subset D$ is convex-area separable in $D$ without constructing convex hull of $S$, which is described as the following \textbf{Algorithm 5}.

\begin{algorithm}[!htb]
	\label{a_approximation}
	\SetKwBlock{DoWhile}{Do}{end}
	\caption{cacs($\bm{DS},D$)}
	\KwIn{$\bm{DS}=\{x_i\in\mathbb{R}^d|i=1,\cdots,n\},D\subset\bm{DS}$;}
	\KwOut{$C=\cap_{k=1}^K\{z\in\mathbb{R}^d|\alpha_k\cdot z+\gamma_k\leq 0\}$ is a convex area or $C=\emptyset$}
	$C\leftarrow \mathbb{R}^d, D'\leftarrow \{(x,1)|x\in D\}, U'\leftarrow \{(x,-1)|x\in \bm{DS}-D\}$ \;
	\For{$(x,-1)\in U'$}{
		$\pi \leftarrow$ svm$(D'\cup\{(x,-1)\})$ \;
		\For{$(x',1)\in D'$}{
			\eIf{$\pi(x')\pi(x)>0$}{\Return{$\emptyset$}\;}{\textbf{continue}\;}
		}
		\eIf{$\pi(x)>0$}{$\pi_c(z)\leftarrow -\pi(z)$ \;}{$\pi_c(z)\leftarrow \pi(z)$ \;}
		$C\leftarrow C \cap \{z\in\mathbb{R}^d|\pi_c(z)\leq 0\}$ \;
	}
	\Return $C$\;
\end{algorithm}

The algorithm cacs firstly marks the given $D$ as class $1$ and $\bm{DS}-D$ as class $-1$, denoted as $D'$ and $U'$. Then in line 2-13, cacs uses svm($D'\cup\{(x,-1)\}$) for any $(x,-1)\in U'$ to find a hyperplane $\pi$ to best separating $D'$ with $\{(x,-1)\}$. After getting $\pi$, cacs tests whether $\pi$ could truly separate $D'$ and $\{(x,-1)\}$. If so, cacs adds $\pi_c=\pi$ or $-\pi$ into output $C$, and return $\emptyset$ otherwise in Line 4-13. 
If cacs never returns $\emptyset$ for every element in $U'$, it returns $C$ as a convex area separating $D$ and other data points in $\bm{DS}$. Obviously, the time complexity of cacs is $O(n((n'+d+1)^3+n'))=O(n(n'+d)^3)$ when $|D|=n'$ and $|\bm{DS}|=n$.

In practice, we can add $U'\leftarrow U'-\{(x',-1)\in U'|\pi_c(x')\leq 0\}$ after Line 13 to reduce the number of elements in $U'$ to be tested and reduce the output size $|C|$. It may significantly reduce $C$.

The following Theorem shows the time complexity and correctness of cacs.
\begin{theorem}
	If $cacs(\bm{DS},D)=\emptyset$, $D$ is not convex-area separable in $\bm{DS}$.
	If $cacs(\bm{DS},D)=C\neq\emptyset$, $D$ is separated from $\bm{DS}-D$ by $C$.
	
	The time complexity of cac is $O(n(n'+d)^3)$ when $|D|=n'$ and $|\bm{DS}|=n$.
\end{theorem}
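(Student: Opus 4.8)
The plan is to prove the three assertions separately, handling the two correctness claims first and the running-time bound last. The key geometric fact underlying both correctness claims is the separating hyperplane theorem specialized to a point versus a finite set: for a finite set $D$ and a point $x$, a hyperplane strictly separating $x$ from $D$ exists if and only if $x\notin\operatorname{conv}(D)$, the convex hull of $D$ (which is compact since $D$ is finite). Combined with the stated property of $svm$ --- that the returned $\pi$ strictly separates the two classes exactly when they are linearly separable, and otherwise leaves some oppositely-labelled pair on the same side --- this lets me read the sign test $\pi(x')\pi(x)>0$ as an \emph{exact} detector of whether the single point $x\in\bm{DS}-D$ lies in $\operatorname{conv}(D)$, because in the instance $D'\cup\{(x,-1)\}$ the only class-$(-1)$ point is $x$.

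For the case $cacs(\bm{DS},D)=\emptyset$, I would argue as follows. The algorithm returns $\emptyset$ only when, for some $x\in\bm{DS}-D$, the hyperplane $\pi=svm(D'\cup\{(x,-1)\})$ satisfies $\pi(x')\pi(x)>0$ for some $x'\in D'$. By the $svm$ property this means $D'\cup\{(x,-1)\}$ is not linearly separable, hence $x$ cannot be separated from $D$ by any hyperplane, hence $x\in\operatorname{conv}(D)$. Since any convex set containing $D$ must contain $\operatorname{conv}(D)$, every convex area $C\supseteq D$ satisfies $x\in C$; but $x\in\bm{DS}-D$, so $C\cap(\bm{DS}-D)\neq\emptyset$. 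Thus no convex area separates $D$ from $\bm{DS}-D$, i.e.\ $D$ is not convex-area separable in $\bm{DS}$.

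For the case $cacs(\bm{DS},D)=C\neq\emptyset$, I would show that each half-space appended to $C$ contains all of $D$ and excludes the current point $x$. When the sign test passes, $\pi$ strictly separates $D'$ from $\{(x,-1)\}$ (with $\pi(x')\neq 0$ guaranteed by the perturbation assumption on $svm$), so after the orientation step the chosen half-space $\{z:\pi_c(z)\leq 0\}$ contains $D$ while $\pi_c(x)>0$. Intersecting over all $x\in\bm{DS}-D$, the returned $C=\cap_k\{z:\alpha_k\cdot z+\gamma_k\leq 0\}$ is convex (a finite intersection of half-spaces), contains $D$, and excludes every point of $\bm{DS}-D$; hence $C$ separates $D$ from $\bm{DS}-D$. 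For the running time, the loop over $U'$ has $|\bm{DS}-D|=n-n'\le n$ iterations; each call $svm(D'\cup\{(x,-1)\})$ runs on $n'+1$ points in dimension $d$ at cost $O((n'+d)^3)$ by the cited bound, the inner sign check costs $O(n')$, and the orientation plus intersection cost $O(d)$, giving $O((n'+d)^3)$ per iteration and $O(n(n'+d)^3)$ overall.

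The hard part will be pinning down the exact correspondence between the $svm$ sign test and convex-hull membership: I must justify that the soft-margin solver's output sign pattern faithfully reports linear separability of the one-versus-many instance $D'\cup\{(x,-1)\}$ in \emph{both} directions (so that the test neither reports a false violation on a separable instance nor misses a genuine overlap), and then invoke the separating hyperplane theorem to convert this into hull membership. A secondary point to verify carefully is the orientation bookkeeping: under the sign convention used by $svm$ for the labels $\pm 1$, I must confirm that the branch selecting $\pi_c$ really places $D$ inside $\{\pi_c\leq 0\}$ and $x$ strictly outside, since this is what makes the intersection argument in the $C\neq\emptyset$ case valid.
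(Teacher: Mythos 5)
Your proposal is correct and follows essentially the same route as the paper's proof: both reduce the sign test $\pi(x')\pi(x)>0$ to linear separability of the one-versus-many instance, invoke the separating hyperplane theorem to equate non-separability with membership in $\operatorname{conv}(D)$, and treat the returned $C$ as a finite intersection of half-spaces each containing $D$ and excluding one point of $\bm{DS}-D$, with the same per-iteration accounting for the time bound. Your explicit remark that any convex set containing $D$ contains $\operatorname{conv}(D)$ (hence must contain the offending point) is a small but welcome clarification that the paper leaves implicit.
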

\begin{proof}
	Let $(u_k,-1)$ be the element of $U'$, $k=1,\cdots,n-n'$ and $C_D$ is the convex hull of $D$. As the knowledge in convex analysis shows, $u_k$ can not be separated with $D$ by hyperplane if and only if $u_k\in C_D$. Besides, $u_k \in C_D$ if and only if there exists $x\in D$ such that $\pi(x)\pi(u_k)>0$, where $\pi=svm(D',(u_k,-1))$, as discussed above.
	
	Therefore, $C=\emptyset$ if and only if there exists an $(u_k,-1) \in U'$ such that $u_k \in C_D$, which indicates $D$ can not be separated with $\bm{DS}$ by hyperplanes.
	
	If $C\neq \emptyset$, let $C=\cap_{k=1}^{n-n'}C_k=\cap_{k=1}^{n-n'}C_k\{z\in\mathbb{R}^d|\pi_k(z)\leq0\}$. Obviously, $C$ is the intersection of $K$ semi-spaces, which makes it a convex area bounded by hyperplanes. Besides, $D\subset C_k$ and $u_k \notin C_k$. So no $u_k\in \bm{DS}-D$ such that $u_k \in \cap C_k$, which means $D$ is separated with $\bm{DS}-D$ by $C$. 
\qed\end{proof}

In the proof of \emph{Theorem 4}, the svm could be directly replaced by gslp. The time complexity of gslp($x,D$) is $O(n')$. Therefore the time complexity of \emph{Algorithm 1} in Section 3 is $O(nn')$. Thus, \emph{Lemma 2} is proved. 

\section{Naive Algorithm for CALR problem when $M=1$}
We introduced naiveCALR in Section 3, and the corresponding analysis is shown in this section. The pseudo-code of naiveCALR is shown in the following \emph{Algorithm 6}
 
\begin{algorithm}[!htb]
	\label{a_approximation}
	\SetKwBlock{DoWhile}{Do}{end}
	\caption{naiveCALR($\bm{DS}$)}
	\KwIn{$\bm{DS}=\{(x_i,y_i)|x_i\in\mathbb{R}^d,y_i\in\mathbb{R},i=1,\cdots,n\}$, M=1;}
	\KwOut{$\bm{H}=\{(f_1,S_1),(f_d,S_d)\}$ such that $f_{H}\in CALR$, minimizing $\sum_{i=1}^n (F_H(x_i)-y_i)^2$}
	$N=1, \bm{H}=\emptyset, L=+\infty$ \;
	\textbf{While} ({$N<n-1$})
	\DoWhile{
		\textbf{enumerate} ({$D\subset \bm{DS}$ and $|D|=N$})
		\DoWhile{
			$C \leftarrow$ cac($\bm{DS},D$) \;
			\eIf{$|C\cap\bm{DS}|>|D|$}{\textbf{continue}\;}{
				$f \leftarrow$ lr($D$), $l_1 \leftarrow \sum_{(x,y)\in D}(f(x)-y)^2$ \;
				$f_d \leftarrow$ lr($\bm{DS}-D$), $l_d \leftarrow \sum_{(x,y)\in \bm{DS}-D}(f_d(x)-y)^2$ \;
				\eIf{$l_1 + l_d < L$}{\textbf{continue}\;}{
					$L\leftarrow l_1+l_d$ \;
					$\bm{H}\leftarrow \{(f,C),(f_0,\mathbb{R}^d - C)\}$ \; 
				}
			}
		}
	}
	\Return $\bm{H}$\;
\end{algorithm}

The naiveCALR enumerates all subsets $D$ contains exactly $N$ data points of $\bm{DS}$ for $N=d+1,d+2,\cdots,n-d-1$. 
The algorithm uses cac($\bm{DS},D$) for each $D$, judging whether $D$ could be separated from $\bm{DS}$ by hyperplanes. If cac($\bm{DS},D$)$\neq \emptyset$, $D$ can be a candidate of $\bm{H}$.
For a candidate $D$, the algorithm models $f$ for $D$ and $f_d$ for $\bm{DS}-D$ to get $F_H$. If the new $F_H$ has smaller MSE, algorithm upgrades the solution $\bm{H}$. 
Obviously, naiveCALR enumerates all possible $D$ and output the $F_H$ with smallest MSE.

As for the time complexity, enumerating all possible subsets $D$ costs $C_n^{d+1}+C_n^{d+2}+\cdots+C_n^{n-d-1}=O(2^n)$ time. Calculating cac($\bm{DS},D$) for each $D$ and testing $\bm{DS}-D$ costs $O(n d^3)$ time. Modelling $f,f_d$ and calculating $l_1,l_d$ costs $O(n)$ time. Therefore, the total time complexity of naiveCALR is $O(n2^nd^3)$.

\section{Details of casCALR and cas2}
We firstly explain the correctness for Line 1-25 of casCALR, then we show the correctness and time complexity of the sub-algorithm post($\bm{H},DS'$). Combining the two parts of explanation, the correctness of casCALR is illustrated.

\begin{theorem}
	After Line 25 of casCALR, $\bm{H}=\{(f_i,C_i)|i=1,\cdots, M\}$, where $C_i$ are disjoint convex area bounded by hyperplanes. And $\{(x,y)\in \bm{DS}||f_0(x)-y|<\varepsilon\}$ might not be separated from $\bm{DS}$ by hyperplanes.
\end{theorem}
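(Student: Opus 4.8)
The plan is to follow the execution of casCALR under the $(M,\varepsilon,\delta)$-convex-area separable hypothesis and verify the assertions one at a time. First I would pin down what the main loop (Lines 2--16) leaves in $F$. By Definition 6 there are $M+1$ ground-truth affine functions $f_0^*,\dots,f_M^*$ with pairwise separation $\ge\delta$, and whenever the sampled $(d+1)$-subset $D$ falls inside a single true region $S_i$, $\mathrm{lr}(D)\cong f_i^*$, so the $F$-test passes ($p_F<\tau$) and, by Lemma 2, cac returns a nonempty separating area. The $\delta$-gap test in Lines 11--13 admits each distinct true function exactly once, and Line 15 then deletes its fitted points from $DS'$. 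Since the loop only halts when both $|F|\ge M$ and $DS'=\emptyset$, it cannot stop until every true function --- including $f_0^*$, whose points are removed only after it too is sampled from a small separable subset --- has been collected; thus $F=\{f_0^*,\dots,f_M^*\}$. Lemma 3 enters only to bound the expected number of rounds before such a good sample appears; correctness needs just that each round draws a good sample with positive probability.

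Next I would analyse Lines 17--25. After $DS'\leftarrow\mathrm{distinct}(F,\bm{DS})$ every surviving point fits exactly one member of $F$, so the clusters $D'_f=\{(x,y)\in DS'\mid|f(x)-y|<\varepsilon\}$ are pairwise disjoint. Applying Theorem 2(1) --- at most one cluster fails to be convex-area separable --- exactly one $f$ makes $\mathrm{cac}(DS',D'_f)=\emptyset$, so $D'_f-C\cap DS'\neq\emptyset$ and Lines 20--21 tag it as $f_0$; each of the remaining $M$ functions yields a nonempty $C=\mathrm{cac}(DS',D'_f)$ and is appended to $\bm{H}$ with its area. This gives $\bm{H}=\{(f_i,C_i)\mid i=1,\dots,M\}$, and each $C_i$ is a convex area bounded by hyperplanes directly from Lemma 2, since cac returns an intersection of half-spaces $\cap_k\{z\mid\alpha_k\cdot z+\gamma_k\le0\}$. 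The closing clause of the theorem is settled by the same step: the single cluster excluded from $\bm{H}$ is precisely the one cac could not enclose, and it plays the role of the complement $C_0=\Omega-\cup_i C_i$, which is generically non-convex, so $\{(x,y)\in\bm{DS}\mid|f_0(x)-y|<\varepsilon\}$ indeed need not be separable by hyperplanes; I would exhibit a simple instance (a cluster lying between two separated convex pieces) to make the ``might not'' concrete.

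The main obstacle is the pairwise disjointness of the $C_i$ as subsets of $\mathbb{R}^d$, as opposed to the disjointness of the point-clusters they capture. The data-level statement is immediate: Line 23 removes $DS'\cap C_i$ the instant $C_i$ is built, so $C_i\cap C_j\cap\bm{DS}=\emptyset$ for $i\neq j$. Upgrading this to $C_i\cap C_j=\emptyset$ as regions is delicate, because cac only guarantees separation from the points still present in $DS'$ and may leave $C_i$ oversized in point-free directions, while the removal in Line 23 means $C_j$ is built without ``seeing'' the already-processed cluster $D'_{f_i}$. The argument I would push is that when $C_i$ is constructed, cac places, for each still-present point of $D'_{f_j}$, a hyperplane with $D'_{f_i}$ and $D'_{f_j}$ on opposite sides; since the true regions $S_i,S_j$ are disjoint and convex the clusters are linearly separable, so these certificates separate the two convex hulls, and I would argue that $C_j$ stays on its own side of each such hyperplane, forcing $C_i\cap C_j=\emptyset$. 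Making this last point airtight despite the point-removal --- i.e.\ showing the sequentially built, progressively restricted areas remain mutually separated --- is exactly where the proof must be most careful.
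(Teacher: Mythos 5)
Your proposal follows essentially the same route as the paper's own proof: trace casCALR under the convex-area-separability assumption, use the fact that after distinct$(F,\bm{DS})$ every surviving point fits exactly one $f\in F$ and that the ambiguous points lie in the slab $|f_i(x)-f_j(x)|<\varepsilon$ bounded by two parallel hyperplanes (this is precisely what Theorem~2(1) encapsulates and what the paper's proof re-derives), conclude that each remaining cluster is enclosed by a cac-produced intersection of half-spaces, and identify $f_0$ as the unique function whose cluster cannot be so enclosed because $C_0=\Omega-\cup_i C_i$ need not be convex. The one place you go beyond the paper is the region-level disjointness of the $C_i$: you correctly observe that cac only certifies separation from the points still present in $DS'$ (which shrinks at Line~23), so the returned polyhedra could in principle overlap in point-free regions, and you leave that step open --- but the paper's proof does not address this at all, establishing only data-level separation and convexity of each individual $C_i$, so the obstacle you flag is a gap in the published argument rather than a defect of your proposal relative to it.
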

\begin{proof}
	Consider that $\bm{DS}$ is convex-area separable, and the underlying model is $H=\{(f_i,C_i)|i=0,1,\cdots,M\}$.
	Then for any $D=\subset \bm{DS}$ and $|D|=d+1$, suppose that $f_D$ is the linear regression model of $D$, $f_D$ satisfies $p_F(f_D)<\tau$ if and only if $|f_D(x)-y|<\varepsilon$ for every $(x,y)\in D$, since $\varepsilon$ is small enough\cite{book2021lra}.
	
	By the property of linear function and linear regression, there are two cases that making sure $p_F(f_D)<\tau$\cite{book2021lra}. One is that $(x,y)\in C_i$ where $f_i\approx f_D$, the other is that $(x,y)\in C_j, j\neq i$, but $|f_j(x)-f_i(x)|<\epsilon$. 
	Let $\pi_{ij} = f_j(x)-f_i(x)$, then $x$ is close to the hyperplane $\pi_{ij}$. Then $\pi_{ij}$ is either inside $C_j$ or is $C_i\cap C_j$.
	Whatever the case, $(x,y)$ satisfies $|f_i(x)-y|<\varepsilon$ and $|f_j(x)-y|<\varepsilon$ simultaneously.
	
	Let $D_i = \{(x,y)\in\bm{DS}||f_i(x)-y|<\varepsilon\}$, $D'_i\subset D_i$ and there exists one $j$ such that $|f_j(x)-y|<\varepsilon$. For any $f_i$ and corresponded $\bm{DS}\cap C_i$, removing the data points such that $(x,y)\in C_j$ and $|f_i(x)-y|<\varepsilon$, the leaving data can still be separated from the rest by hyperplanes since $C_i$ is convex. 
	As for $D_{ij}=\{(x,y)\in C_i||\pi_{ij}(x)|<\varepsilon\}$, they are already separated from $D_i-D'_i$ by hyperplane $\pi_{ij}(x)+\varepsilon=0$ and $\pi_{ij}(x)-\varepsilon=0$.
	Therefore, $D_i-D'_i$ can be separated from $\bm{DS}$ by $C_i$, and cac($D_i-D'_i,\bm{DS}$) consists $D_i-D_{ij}$.
	
	As for the Line 21, $C_0$ is not necessary to be convex by the definition of CALF. If $C_0$ in the underlying model is not convex, the convex hull of $C_0\cap \bm{DS}$ must contain $(x,y)\in \bm{DS}-C_0\cap\bm{DS}$. It means that there exists $(x,y)\in \bm{DS}-C_0\cap\bm{DS}$ can not be linearly separated from $C_0 \cap \bm{DS}$, which means algorithm cac($C_0\cap\bm{DS},\bm{DS}-C_0\cap\bm{DS}$) returns $\emptyset$. Besides, $C_0$ is the only element in $\bm{H}$ having such characteristics. Therefore, $f_0$ acquired from Line 21 is the function of $C_0$, and $\{(x,y)\in \bm{DS}||f_0(x)-y|<\varepsilon\}$ might not be separated from $\bm{DS}$ by hyperplanes.
\qed\end{proof}

From the proof we can know that after Line 25 of casCALR, any data points in $DS'$ fitting more than one $f_i$ of the current $\bm{H}$. Then Line 26 is designed to allocate them into polyhedron to make sure the final $F_{\bm{H}}$ is CALF. We firstly give the pseudo-code of algorithm post, and the following theorem formally explain it.

\begin{theorem}
	$F_{\bm{H}'}\in CALF$, where $\bm{H}'$ is the output of post($\bm{H},D,\varepsilon$), and $D=\cup_{i=1}^{|\bm{H}|} (D\cap H'_i)$.
	The time complexity of post($\bm{H},D,\varepsilon$) is $O(M^2nn')$, where $n'=|\bm{DS}-D|$.
\end{theorem}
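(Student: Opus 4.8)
The plan is to establish the two assertions separately: first that the collection $\bm{H}'$ returned by post meets the three defining requirements of \emph{Definition 1} together with the covering identity $D=\cup_i(D\cap H'_i)$, and then to bound the running time by counting the loop iterations and charging each pass to a call of cac through \emph{Lemma 2}.

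For the membership $F_{\bm{H}'}\in CALF$ I would verify the conditions of \emph{Definition 1} one at a time. The functions attached to the pairs of $\bm{H}'$ are drawn from $F=\{f\mid(f,H)\in\bm{H}\}$, which are the linear regression models produced earlier by casCALR, so each $g_i$ is linear. Every region $H'_i$ is a value $C_t$ returned by cac($D,D_1\cap D_2$); by \emph{Lemma 2} the output of cac is either $\emptyset$ or a finite intersection of half-spaces, hence a convex polyhedron, so each $H'_i$ is convex and bounded by hyperplanes. To rule out the empty case I would reuse the argument of the preceding theorem: the set $D_1\cap D_2$ consists of the points fitting both $f$ and $f_t$ to within $\varepsilon$, so it lies in a thin slab around the boundary hyperplane $\pi=f-f_t$, and such a slab is linearly---hence convex-area---separable from the surviving points of $D$, so cac does not return $\emptyset$.

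The covering identity rests on the invariant, established in the preceding theorem, that the set $D=DS'$ handed to post after Line 26 of casCALR consists exactly of points fitting more than one member of $F$ within $\varepsilon$. Thus every $(x,y)\in D$ belongs to $D_1\cap D_2$ for at least one ordered pair $(f,f_t)$, and since the nested loops sweep over all such pairs, each point is eventually captured by some $C_t$ and deleted from $D$; therefore $D=\cup_i(D\cap H'_i)$. The hard part is the disjointness of the polyhedra $H'_i$, because cac guarantees only that a freshly built $C_t$ is separated from the points of $D$ that survive at that instant, not from the points already removed and assigned to earlier polyhedra, and intersecting the new region with the complement of an old convex region would destroy convexity. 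My plan to close this gap is to exploit the slab geometry once more: the blocks $D_1\cap D_2$ assigned to distinct pairs cluster around distinct intersections of action scopes, so their convex hulls are pairwise disjoint, and each $C_t$ may be tightened by intersecting it with the two parallel half-spaces $\{\pi\le\varepsilon\}$ and $\{-\pi\le\varepsilon\}$ that already isolate its slab, which keeps $C_t$ convex while driving the regions apart. The delicate point I would have to check is that this tightening still contains all of $D_1\cap D_2$ and admits no previously removed point.

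For the running time I would simply count iterations. The outer loop ranges over the $M$ functions of $F$ and the inner loop over the remaining $M-1$, giving $O(M^2)$ passes. Each pass forms $D_1$, $D_2$ and $D_1\cap D_2$ in $O(n)$ time and then performs a single call to cac, whose cost is $O(nn')$ by \emph{Lemma 2}; the set operations are of lower order, so the work per pass is $O(nn')$ and the total is $O(M^2nn')$, which is the claimed bound.
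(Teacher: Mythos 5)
Your proposal follows essentially the same route as the paper's proof: the slab argument (each block $D_1\cap D_2$ is pinched between the hyperplanes $f_i(x)-f_j(x)=\pm\varepsilon$, hence cac succeeds), the covering identity derived from the invariant of the preceding theorem that every surviving point fits at least two functions in $F$, and the $O(M^2)\times O(nn')$ iteration count for the time bound. Where you go beyond the paper is the disjointness of the regions $H'_i$: the paper's proof simply does not address it, even though disjointness of the convex areas is one of the requirements of \emph{Definition 1} and cac only separates a block from the points still present in $D$ at the moment of the call, not from blocks already extracted. Your observation that this is the hard part is correct, and your proposed repair --- tightening each $C_t$ by intersecting it with the two half-spaces $\{f-f_t\le\varepsilon\}$ and $\{-(f-f_t)\le\varepsilon\}$ that define its slab --- is a sensible direction, since the tightened region still contains all of $D_1\cap D_2$ by construction; but you are right that one must still verify that slabs belonging to distinct pairs do not overlap (they can, near points where three or more action scopes meet), and neither your sketch nor the paper closes that case. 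In short: your argument reproduces everything the paper proves and honestly flags the one step the paper leaves unproven; to fully discharge the theorem as stated, the pairwise disjointness of the output polyhedra would still need an explicit argument or a modification of post that processes pairs in a fixed order and subtracts earlier slabs in a way that preserves convexity.
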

\begin{proof}
	From the proof of Theorem 5, it's known that $D_{ij}=\{(x,y)||f_i(x)-y|<\varepsilon, |f_j(x)-y|<\varepsilon\}$ is bounded by two hyperplanes $f_i(x)-f_j(x)-\varepsilon=0$ and $f_i(x)-f_j(x)+\varepsilon=0$. So $D_{ij}$ can be separated by cac($\{(x,1)|(x,y)\in D_{ij}\},\{(x,-1)|(x,y)\in D- D_{ij}\}$). Also from the proof of Theorem 5, $D$ only contains data points fitting more than one $f\in F$. Therefore, traversing $i,j=1,\cdots,M$ of $f_i$ and $f_j$ contains all data points in $D$, which makes $D=\cup_{i=1}^{|\bm{H}|} (D\cap H'_i)$.
	
	There are two levels of iterations for $f\in F$. In each iteration, the algorithm takes $O(n)$ time to construct $D_1$ and $D_2$, and takes $O(nn')$ to invoke cac. Finally, the time complexity of post($\bm{H},D,\varepsilon$) is $O(M^2 nn')$.
\qed\end{proof}

Combining the \emph{Theorem 5} and \emph{Theorem 6}, the correctness of the algorithm casCALR has been proved.

A much more concise cas2 in \textit{Algorithm 7} can be proposed if $\bm{DS}$ is convex-area separable and $M=1$.

\begin{algorithm}[!htb]
	\label{a_approximation}
	\SetKwBlock{DoWhile}{Do}{end}
	\caption{cas2($\bm{DS},\tau,\varepsilon,\delta$)}
	\KwIn{$\bm{DS}=\{(x_i,y_i)|x_i\in\mathbb{R}^d,y_i\in\mathbb{R},i=1,\cdots,n\}$ convex-area separable;}
	\KwOut{$\bm{H}=\{(f_i,S_i)|i=0,1\}$ such that $F_H \in CALF$, minimizing $\sum_{i=1}^n (F_H(x_i)-y_i)^2$}
	$\bm{H}=\emptyset$ \;
	\textbf{While} {\textbf{TRUE}}
	\DoWhile{
		$D\leftarrow$ uniformly sample $d+1$ data points from $\bm{DS}$, $f_1 \leftarrow$ lr($D$) \;
		\eIf{$p_F(f_1)\geq\tau$}{\textbf{continue}\;}{\textbf{break}\;}
	}
	$D \leftarrow \{(x,y)|(x,y)\in\bm{DS},|f_1(x)-y|< \varepsilon\}$ \;
	$D' \leftarrow \bm{DS}-D$, $f_2\leftarrow$ lr($D'$) \;
	$D_0 = D\cap D'$, $D\leftarrow D-D_0$, $D'\leftarrow D'-D_0$ \;
	$C_1 \leftarrow$ cac$(\bm{DS}-D_0,D)$, $C_2 \leftarrow $cac$(\bm{DS}-D_0,D')$ \;
	\eIf{$C_1 = \emptyset$}{
		$\bm{H}\leftarrow \{(f_2,C_2),(f_1,\mathbb{R}^d-C_2)\}$ \;
	}{$\bm{H}\leftarrow \{(f_1,C_1),(f_2,\mathbb{R}^d-C_1)\}$\;}
	\Return $\bm{H}$\;
\end{algorithm}

Similarly, cas2 samples subsets $D$ to construct $f_1,f_2$ in Line 2-9. Then cas2 finds data points fitting $f_1,f_2$ and constructs $C_1,C_2$ for them in Line 10-15. Differently, cas2 doesn't need to invoke post$(\bm{H},DS',\varepsilon)$. The expected time complexity of cas2 is shown in the following \emph{Theorem 7}.

\begin{theorem}
	The expected time complexity of cas2 is $O(2^{d+1}d^3+nn')$, where $n'=max\{|C_1\cap\bm{DS}|,|C_2\cap\bm{DS}|\}$.
\end{theorem}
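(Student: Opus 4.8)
\section*{Proof proposal for Theorem 7}

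The plan is to split the running time of cas2 into two phases and bound each separately: the randomized sampling phase in Lines 2--9, and the deterministic construction phase in Lines 10--18. The expected cost of the first phase will yield the $O(2^{d+1}d^3)$ term, while the two convex-area constructions in Line 13 will yield the $O(nn')$ term; every remaining line will be shown to be of lower order and absorbed.

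For the sampling phase, I would first observe that since $\bm{DS}$ is convex-area separable with $M=1$, its points split into exactly two underlying subsets, the foreground region fitting the local model and the background region fitting $f_0$, each of size at least $d+1$. A uniformly sampled $D$ with $|D|=d+1$ passes the test $p_F(f_1)<\tau$ in Line 5 precisely when all $d+1$ sampled points lie in a single one of these two subsets, which is exactly the event $A$ of \emph{Lemma 3}. Applying \emph{Lemma 3} with two subsets gives expected number of iterations $\mathbb{E}[T]=O(q^{d+1}2^{d+1})$ with $\frac{1}{q}=1-\frac{2d}{n}$, so $q=O(1)$ whenever $n$ is not tiny relative to $d$. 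Each iteration costs $O(d^3)$, dominated by running lr on $d+1$ points (the sampling itself is $O(d^2)$ and the $F$-test on $d+1$ points is $O(d)$). Hence the expected cost of Lines 2--9 is $\mathbb{E}[T]\cdot O(d^3)=O(2^{d+1}d^3)$.

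For the deterministic phase, Lines 10--12 scan all $n$ points to form $D$, $D'$ and $D_0$ in $O(nd)$ time and run one linear regression $f_2=$ lr($D'$) in $O(d^2n+d^3)$ time, which is absorbed into $O(nn')$ once $d$ is treated as small. The crux is Line 13, where both $C_1=$ cac$(\bm{DS}-D_0,D)$ and $C_2=$ cac$(\bm{DS}-D_0,D')$ are computed. The key refinement over the black-box bound of \emph{Lemma 2} is that cac$(A,B)$ iterates only over $U=A-B$, calling gslp at cost $O(|B|)$ each, so it runs in $O(|U|\cdot|B|)$ rather than $O(|A|\cdot|B|)$. Because $D$ and $D'$ partition $\bm{DS}-D_0$, we have $|U|+|B|\le n$ for each call, and exactly one of the two subsets is the small convex foreground of size at most $n'$ while the other is the large background; in either call the smaller of $|U|,|B|$ is therefore bounded by $n'$, giving $|U|\cdot|B|\le\min(|U|,|B|)\cdot n=O(nn')$. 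When a subset is the non-convex background, cac moreover returns $\emptyset$ as soon as gslp fails, which only helps. Thus both calls together cost $O(nn')$, and the branching in Lines 14--18 is $O(n)$.

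Summing the phases yields $O(2^{d+1}d^3+nd+d^2n+nn')=O(2^{d+1}d^3+nn')$ after absorbing the $O(d^2n)$ regression term into $O(nn')$ under the standing assumption that $d$ is small. The main obstacle is precisely the analysis of Line 13: applying \emph{Lemma 2} directly would bound the cac call on the large background region by $O(n^2)$, so the argument must open up cac and use the sharper $O(|U|\cdot|B|)$ count together with the partition identity $|U|+|B|\le n$ and the early $\emptyset$-return, in order to replace $n^2$ by $nn'$. A secondary care point is tracking the constant $q$ from \emph{Lemma 3} and the omitted $O(d^2n)$ term, both of which are benign only in the regime where $d$ is small relative to $n$.
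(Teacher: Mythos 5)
Your proposal is correct and follows essentially the same route as the paper: bound the expected number of sampling rounds via \emph{Lemma 3} with two subsets to get the $O(2^{d+1}d^3)$ term, and charge the remaining deterministic work to the two cac calls at $O(nn')$. You are in fact somewhat more careful than the paper at two points it glosses over --- tracking the constant $q=(1-2d/n)^{-1}$ rather than asserting $\frac{n}{N-d}\leq 2$ outright, and opening up cac to show the call on the large background set costs $O(|U|\cdot|B|)\leq nn'$ rather than $O(n^2)$ --- but the decomposition and conclusion are the same.
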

\begin{proof}
	From the pseudo-code, cas2 sampled $D$ with $|D|=d+1$ to construct $f_1$ or $f_0$. Once cas2 find a subset $D$ which the linear regression model of it $f$ satisfying $p_F{f}<\tau$, cas2 moves to Line 8. Suppose cas2 samples $T$ times to find the suitable $D$, it costs $O(d^3)$ to construct $f$. So the time complexity of Line 2-7 is $O(Td^3)$.
	
	In Line 8, cas2 directly gets the other $f'\in\bm{H}$ constructing from data points which are not fitted by $f$. Then cas2 finds the subsets fitted by $f$ and $f'$, and removes the data points fitted by both, getting $D_1$ and $D_2$. After that cas2 constructs convex area $C_1$ and $C_2$ containing $D$ and $D'$ respectively. By the assumption, either $D_1\cap C_2=\emptyset$ and $D_2\subset C_1=\emptyset$, or one $C_i$ contains data points from the other subsets. Taking the one $C_i$ such that $C_i \cap D_j\neq\emptyset,i\neq j$ as $C_0$ solves the problem. As discussed, the time complexity of Line 8-15 is $O(n+d^3+nn'+n)=O(nn')$.
	
	Adding the two parts of cas2, the total time complexity is $O(Td^3+nn')$. It's known that $\mathbb{E}[T]\leq (\frac{n}{N-d})^{d+1}$ where $N=\max\{N_1,N_2\}$ from the proof of Lemma 3. Then $\frac{n}{N-d}\leq 2$ since $N\geq \frac{n}{2}$. Therefore, the final expected time complexity of cas2 is $O(2^{d+1}d^3+nn')$.
	
	\qed\end{proof}
Finally, the correctness of cas2 is formally described in the following theorem.
\begin{theorem}
	Let $C_1$ and $C_2$ be constructed in Line 11 of cas2, then if $C_1\neq\emptyset$, $C_1 \cap\bm{DS}$ fitting $f_1$ and $(\mathbb{R}^d - C_1) \cap \bm{DS}$ fitting $f_2$ , and vice versa.
\end{theorem}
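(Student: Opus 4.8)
The plan is to combine the correctness guarantees of \texttt{cac} (\emph{Lemma 2}) with the convex-area-separability hypothesis and the structural fact, already established for \texttt{casCALR}, that among the underlying pieces at most one can fail to be convex-area separable (\emph{Theorem 2}(1)). First I would argue that the model $f_1$ accepted in Line 2--7 approximates one of the two underlying linear pieces of $\bm{DS}$. Since the loop breaks only once $p_F(f_1)<\tau$ and $\varepsilon$ is small, the sampled $d+1$ points cannot straddle two pieces whose models differ by at least $\delta$; by the linear-regression argument used in the proof of \emph{Theorem 5}, $f_1$ therefore coincides up to $\varepsilon$ with either the convex-region model or the default model of the underlying $H$. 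Reading $D'$ in Line 9 as the data fitting the freshly computed $f_2$, the set $D$ of Line 8 is then exactly the data fitting $f_1$, the set $D'$ is the data fitting $f_2$, and the overlap $D_0=D\cap D'$ collects precisely the boundary points that fit both models simultaneously.

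Next I would show that at least one of $C_1,C_2$ is nonempty and that the nonempty one separates the data consistently with the fit. By \emph{Definition 8}, one of the two pieces lies in a convex region, so after deleting the ambiguous points $D_0$ its associated subset has a convex hull disjoint from the other subset; applying \emph{Theorem 2}(1) to the pair $(D,D')$ shows that at most one of them fails to be convex-area separable, hence \texttt{cac} returns a genuine convex area for at least one of them. Suppose $C_1=\texttt{cac}(\bm{DS}-D_0,D)\neq\emptyset$. By \emph{Lemma 2}(2) we get $D\subseteq C_1$ while $C_1\cap\big((\bm{DS}-D_0)-D\big)=\emptyset$, so no point of $D'$ lies inside $C_1$. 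The only remaining points are those of $D_0$, and since these fit both $f_1$ and $f_2$, wherever they fall relative to $C_1$ they cannot contradict the claim.

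Combining these facts yields both halves of the statement. Because $D\subseteq C_1$ and $C_1$ excludes $D'$, every point of $C_1\cap\bm{DS}$ lies in $D\cup(D_0\cap C_1)$, all of which fit $f_1$; dually, $(\mathbb{R}^d-C_1)\cap\bm{DS}\subseteq D'\cup(D_0\setminus C_1)$, all of which fit $f_2$. Running the same argument with the roles of $D$ and $D'$ (equivalently $f_1$ and $f_2$) exchanged handles the case $C_1=\emptyset$, $C_2\neq\emptyset$ and supplies the ``vice versa'' direction, which is exactly the branch selected in Lines 12--15.

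I expect the main obstacle to be the careful bookkeeping of the overlap set $D_0$: one must verify that removing $D_0$ before invoking \texttt{cac} is precisely what makes the two remaining subsets convex-area separable, so that at least one \texttt{cac} call succeeds, while simultaneously checking that the points of $D_0$ never violate the fitting conclusion regardless of which side of $C_1$ they end up on. A secondary delicate point is justifying, from the small-$\varepsilon$ and $p_F<\tau$ conditions, that $f_1$ and $f_2$ genuinely recover the two distinct underlying models rather than some spurious average; this is where the separation $\Vert f_i-f_j\Vert\geq\delta$ of \emph{Definition 8} is indispensable.
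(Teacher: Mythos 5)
Your argument follows essentially the same route as the paper's own proof: identify $f_1,f_2$ with the two underlying models via the $p_F<\tau$ and $\delta$-separation conditions, apply the separation guarantee of \texttt{cac} (\emph{Lemma 2}) to the sets with the overlap $D_0$ removed, and observe that the $D_0$ points fit both models so they cannot violate the conclusion on either side of $C_1$. Your version is in fact somewhat more careful than the paper's (which asserts $C_1\cap\bm{DS}\subset S_1\cap\bm{DS}$ directly without tracking $D_0$, and does not argue that at least one of $C_1,C_2$ is nonempty), but the underlying idea is the same.
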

\begin{proof}
	Suppose that the underlying model of $\bm{DS}$ is $\bm{H}=\{(g_1,S_1),(g_0,S_0)\}$ and $f_1\cong g_1,f_2\cong g_0$. Thus, $D = C_1\cap (\bm{DS}-D_0) \subset S_1 \cap \bm{DS}$. By the definition of cac, $C_1\neq \emptyset$ and $C_1 \cap \bm{DS} \subset S_1 \cap \bm{DS}$. For the data points such that $x \in S_1 \cap \bm{DS} - C_1 \cap \bm{DS}$, they must satisfy $|f_2(x)-y|<\varepsilon$. So they fits $f_2$ as well. Thus, $D' = C' \cap \bm{DS}$ fits $f_2$ where $C'=\mathbb{R}^d - C_1$.
\qed\end{proof}

\section{Details of pseudo-inverse matrix method}

This section shows the detail of \emph{pseudo-inverse matrix method}.

Given dataset $\bm{DS}$, let $\bm{y} = (y_1, y_2,\cdots,y_n)^T$ and $\bm{X}_{n \times (k+1)}$ be the data matrix of $\bm{DS}$:
$$\bm{X}=\begin{pmatrix}
	1 & x_{11} & x_{12} & \cdots & x_{1d} \\
	1 & x_{21} & x_{22} & \cdots & x_{2d} \\
	\vdots & \vdots & \vdots & \vdots \\
	1 & x_{n1} & x_{n2} & \cdots & x_{nd} \\
\end{pmatrix} .$$ 
The $x_{ij}$ in $\bm{X}$ equals to the value of the $i$-th data point's $j$-th dimension in $\bm{DS}$. Then, using the formula $\hat{\bm{\beta}}=(\bm{X'}\bm{X})^{-1}\bm{X'}\bm{y}$, the linear regression model $\hat{f}$ could be constructed.
The time complexity of pseudo-inverse matrix method is $O(d^2 n + d^3)$ \cite{book2021lra}. When $d$ is big enough, gradient methods is more efficient than pseudo-inverse matrix method. Generally, every method's complexity has the bound $O(d^2 n + d^3)$, so this paper use it as the time complexity of linear regression in common. And let lr($\bm{D}$) be any $O(d^2 n + d^3)$ time algorithm to construct the linear function of given datasets $\bm{DS}$. 

\section{Details of TBI dataset}
Fig.3 is an example of TBI dataset, given in \cite{TBI2005}, which is used to predict traumatic brain injury patients' response with sixteen explanatory variables $x_i , 1 \leq i \leq 16$. In \cite{dong2015pattern}, researchers analyzed this data with both one linear model and several. As shown in Fig.3, the $RMSE$ is 10.45 when one linear regression model is used to model the whole TBI. However the $RMSE$ is reduced to 3.51 while TBI is divided into 7 subsets and 7 different linear regression models are used to model the 7 subsets individually. 
The \emph{goodness of fit}($R^2$ for short) of the one linear regression model for modelling TBI is 0.29. But when using 7 linear regression models to model TBI, $R^2$ increases to $0.85$. 

\begin{figure}[htbp]
	\centering
	\includegraphics[width=0.8\linewidth]{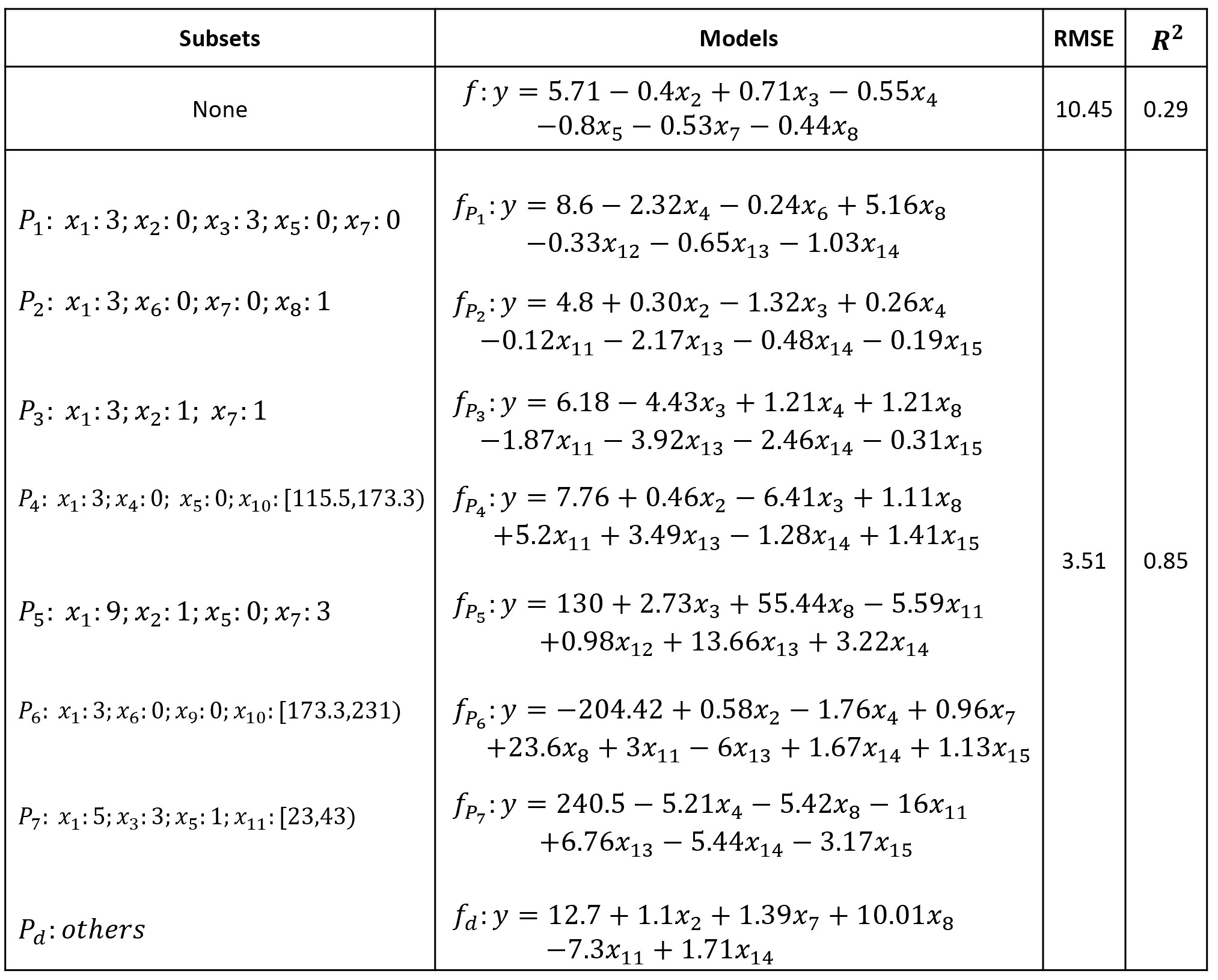}
	\caption{Using one linear regression model and 7 regression models for TBI data, has totally different prediction accuracy.}
	\label{fig:DPRV}
\end{figure}

\end{document}